\newcommand{\R}{{\mathbb R}}
\newcommand{\EE}{{\mathbb E}}
\newcommand{\PP}{{\mathbb P}}
\newcommand{\sN}{{\cal{N}}}
\newtheorem{defn}{Definition}
\newtheorem{thm}{Theorem}
\newtheorem{res}{Lemma}
\newtheorem{cor}{Corollary}
\renewenvironment{proof}{\noindent{\bf Proof:} }{\hfill $\square$ \\}
\begin{document}
\thispagestyle{empty}
\begin{center}
{\Large \sc
Moment Analysis of the Delaunay Tessellation Field Estimator}\\[.5in]

\noindent
{\large M.N.M. van Lieshout}\\[.2in]
\noindent
{\em CWI\/} \& {\em Eindhoven University of Technology\/} \\
\mbox{} \\
P.O. Box 94079, 1090 GB Amsterdam, The Netherlands \\[.2in]
\end{center}

\begin{verse}
{\footnotesize
\noindent
{\bf Abstract}\\
\noindent
The Campbell--Mecke theorem is used to derive explicit expressions for the 
mean and variance of Schaap and Van de Weygaert's Delaunay tessellation field 
estimator. Special attention is paid to Poisson processes.\\[0.2in]

\noindent
{\em Keywords \& Phrases:}
Campbell--Mecke formula, Delaunay tessellation field estimator, generalised
weight function estimator, intensity function, mass preservation, Poisson 
point process, second order factorial moment measure, second order product 
density.

\noindent
{\em 2000 Mathematics Subject Classification:}
60G55, 62M30.
}
\end{verse}
\thispagestyle{empty}

\section{Preliminaries and notation}
\label{S:lambda}

Let $\varphi$ be a locally finite point pattern in $\R^d$ arising as a realisation 
of simple point processes $\Phi$ on $\R^d$ \cite{DaleVere88,Lies00}. In practice, 
$d \in \{1, 2, 3\}$. We shall assume that the points are in general quadratic position 
\cite{Moll94}, that is, 
(a) no $d+2$ points are located on the boundary of a sphere, and 
(b) in the plane no three points are co-linear; in higher dimensions, no $k+1$ 
    points lie in a $k-1$ dimensional affine subspace for $k=2, \dots d$.
These assumptions are satisfied almost surely for realisations of a Poisson process 
with locally finite intensity function $\lambda: \R^d \to [0,\infty)$ or, more 
generally, for Gibbs point processes defined by their probability density with
respect to such a Poisson process. 

Any point pattern $\varphi$ gives rise to two interesting tessellations. First 
consider the set
\[
C(x_i \mid \varphi) := 
   \{ y \in \R^d : || x_i - y || \leq || x_j - y || \quad \forall x_j \in \varphi \}
\]
that consists of all points in $\R^d$ that are at least as close to $x_i \in \varphi$ 
as to any other point of $\varphi$, which is called the {\em Voronoi cell\/} of $x_i$. 
The ensemble of all Voronoi cells is the {\em Voronoi tessellation\/} of $\varphi$ 
\cite{Voro08}. An equivalent definition is
\[
C(x_i \mid \varphi ) = \bigcap_{x_j\neq x_i\in \varphi} H(x_i, x_j),
\]
where $H(x_i, x_j)$ is the closed halfspace 
$\{ y \in \R^d : \left< y - ( x_i + x_j ) / 2, x_i - x_j \right> \geq 0 \}$ 
consisting of points that are at least as close to $x_i$ as to $x_j$. 
In $\R^1$, for $x_i < x_j$, $H(x_i, x_j) = ( - \infty, (x_i + x_j) / 2 ]$. In 
the plane, $H(x_i, x_j)$ is the closed halfplane bounded by the bisecting line
$L(x_i, x_j)$ of the segment connecting $x_i$ and $x_j$ that contains $x_i$.
Note that the Voronoi cells are closed and convex, but not necessarily
bounded.

Under our assumptions, intersections between $k = 2, \dots, d+1$ different
Voronoi cells are either empty or of dimension $d - k + 1$. In particular,
\[
\bigcap_{i=1}^{d+1} C(x_i  \mid \varphi) \neq \emptyset 
\Leftrightarrow b(x_1, \dots, x_{d+1}) \cap \varphi = \emptyset
\]
where $b(x_1, \dots, x_{d+1})$ is the open ball spanned by 
$x_1, \dots x_{d+1}$ on its boundary, and in that case is a single
point, usually referred to as a {\em vertex\/} of the Voronoi diagram. 

Vertices can be used to define the second tessellation of interest to
us in this paper, the {\em Delaunay tessellation\/}. Indeed, suppose that 
$\varphi$ contains at least $d+1$ points. Each Voronoi vertex arising as 
the intersection of $d+1$ cells $C(x_i \mid \varphi)$ defines a closed 
simplex, the convex hull of $\{ x_1, \dots, x_{d+1}\}$, which is called a 
{\em Delaunay cell\/} \cite{Dela34} and denoted by $D(x_1, \dots, x_{d+1})$. 
Note that for $d=1$, Delaunay cells are intervals, whilst in the plane they 
form triangles.
An alternative, equivalent, edge based construction is to join points 
$x_1, x_2 \in \varphi$ that share a common Voronoi border 
$C(x_1 \mid \varphi) \cap C(x_2 \mid \varphi) \neq \emptyset$ into a
Delaunay edge. In this case, $x_1$ and $x_2$ are called {\em Voronoi
neighbours\/}. The set of neighbours of $x_1$ in $\varphi$ is denoted by
$\sN(x_1 \mid \varphi )$. Either way, the partition of space formed by the 
Delaunay cells is referred to as the {\em Delaunay tessellation\/}. The 
union of Delaunay cells containing $x_i \in \varphi$ is known as the {\em 
contiguous Voronoi cell\/} $W( x_i \mid \varphi )$ of $x_i$ in $\varphi$.

\begin{figure}[hbt]
\begin{center}
\epsfxsize=0.50\hsize
\epsffile{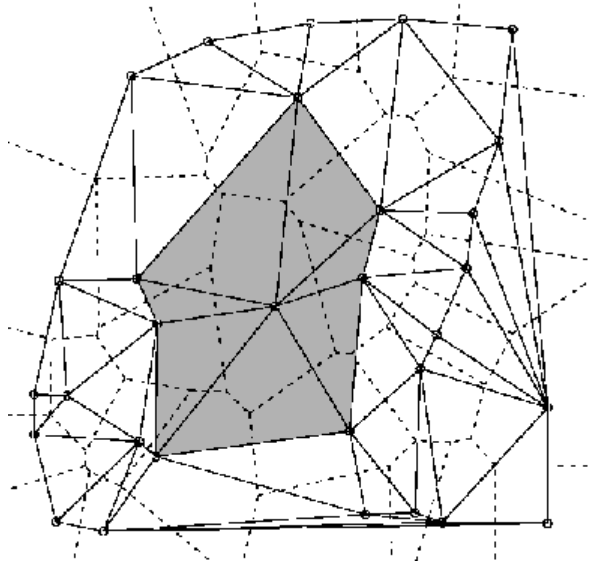}
\end{center}
\caption{A set of thirty points with their Voronoi (dashed lines) and
Delaunay (solid lines) tessellations. A contiguous Voronoi cell is
indicated by shading.}
\label{F:tessellation}
\end{figure}

For more details, including an historical account, the reader is referred 
to the comprehensive textbooks \cite{Moll94,Okabetal92}. An illustration
is given in Figure~\ref{F:tessellation}, which was obtained using the
DELDIR package \cite{deldir}.

\section{Delaunay tessellation field estimator}
\label{S:prelim}

Recently, Schaap and Van de Weygaert \cite{SchaWeyg00,Scha07}
proposed to estimate the intensity function of a spatial point process 
by the so-called {\em Delaunay tessellation field estimator\/} (DTFE). 
The method estimates the intensity at points in a realisation reciprocal to 
the volume of their contiguous Voronoi cell, and distributes these estimated 
field values over Delaunay cells by linear (or other) interpolation. They 
also consider interpolation of fields $x \mapsto f(x) \in \R^+$ observed 
at sampling points. Earlier suggestions to use Voronoi tessellations for
field interpolation include those by Ord \cite{Ord78} and Sibson \cite{Sibs81}.

Based on extensive simulations, Schaap and Van de Weygaert claim that, in 
contrast to kernel estimators \cite{BermDigg89}, the DTFE preserves the total 
mass of the field and fine structural details, appears to result in smooth 
interpolation, adapts itself to the local scale and geometry, and is relatively 
robust. The limitations of the method lie in its sensitivity to measurement 
error, boundary effects, and triangular artefacts \cite{Scha07}. Our aim in 
this paper is a rigorous analysis of this estimator.

Throughout this paper, let $\Phi$ be a simple point process on $\R^d$ having
realisations in general quadratic position for which the expected number of 
points placed in bounded Borel sets is finite so that its (first order) moment 
measure exists as a $\sigma$-finite Borel measure. Furthermore, assume that the 
moment measure is absolutely continuous with respect to Lebesgue measure with 
Radon--Nikodym derivative $\lambda: \R^d \to [0,\infty)$, its {\em intensity 
function\/}.

\begin{defn}
\label{d:DTFE}
Consider a point process $\Phi$ observed in a convex bounded Borel subset $A$ of 
$\R^d$. For $x\in \Phi\cap A$, define
\begin{equation}
\label{e:DTFE}
\widehat { \lambda(x) } := \frac{d+1}{ | W(x \mid \Phi \cap A) | },
\end{equation}
where $| \cdot |$ denotes $d$-volume. For any $x_0 \in A$ in the interior of some 
Delaunay cell, define 
\begin{equation}
\widehat {\lambda(x_0)} := \frac{1}{d+1} 
\sum_{x \in \Phi \cap D( x_0 \mid \Phi\cap A) } \widehat { \lambda(x) }
\label{e:interpolate}
\end{equation}
as the average of the estimated intensity function values at the $d+1$ vertices 
$x$ of the Delaunay cell $D(x_0 \mid \Phi\cap A)$ containing $x_0$. 
\end{defn}

A few remarks are in order. Should a particular realisation $\varphi$ of $\Phi$ 
happen to contain less than $d+1$ points in $A$, the intensity function estimate 
may be set to zero, or (cf.\ the Lemma below) to the number of points divided by 
$|A|$. On the sides of the Delaunay cells, any averaging may be used -- it is a 
null set. Finally, $\widehat{\lambda(x_0) }$ is set to zero for points that do
not fall in any Delaunay cell.

Edge effects arise due to the fact that $\Phi$ is not observed, only $\Phi\cap A$, 
the Delaunay tessellation of which partitions the convex hull of $\Phi \cap A 
\subseteq A$. Such effects may be dealt with in many ways. For example, one might 
use torus corrections, add arbitrary points on the boundary of $A$ (the corners for 
example in the generic case of a cube), or draw lines orthogonal to the edges emanating 
from points on the boundary of the convex hull, etc. Further examples can be found 
in chapter~6 of \cite{Okabetal92}. 

\begin{res} (Schaap and Van de Weygaert \cite{SchaWeyg00,Scha07}) \\
Let $\varphi$ be a realisation of the simple point process $\Phi$ 
containing at least $d+1$ points in $A$. Then the estimator of
Definition~\ref{d:DTFE} preserves total mass, that is,
\[
\int_A \widehat {\lambda(x_0)} dx_0 = n(\varphi \cap A), 
\]
the number of points of $\varphi$ in $A$.
\label{l:mass}
\end{res}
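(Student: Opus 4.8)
The plan is to compute the integral $\int_A \widehat{\lambda(x_0)}\,dx_0$ by decomposing $A$ (up to a null set) into the Delaunay cells and summing the contributions from each cell.

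The plan is to evaluate the integral by exploiting the fact that the Delaunay tessellation partitions the convex hull of $\varphi \cap A$ into cells, and that on the interior of each cell the estimator $\widehat{\lambda(x_0)}$ defined in \eqref{e:interpolate} is constant, namely the arithmetic mean of its $d+1$ vertex values. First I would discard the null set consisting of the cell boundaries, and observe that points $x_0 \in A$ lying outside the convex hull of $\varphi \cap A$ (which is contained in $A$ by convexity) belong to no Delaunay cell and hence contribute nothing. This reduces the integral over $A$ to a finite sum of integrals over the individual Delaunay cells.

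Second, on a fixed cell $D(x_1, \dots, x_{d+1})$ the integrand equals the constant $\frac{1}{d+1}\sum_{i=1}^{d+1}\widehat{\lambda(x_i)}$, so its integral is just $|D(x_1,\dots,x_{d+1})|$ times this average. (Had genuine linear interpolation been used in place of the average, the integral of an affine function over a simplex would still equal the volume times the value at the centroid, which is this same average, so the computation would be unaffected.) Summing over all cells gives
\[
\int_A \widehat{\lambda(x_0)}\,dx_0 = \frac{1}{d+1}\sum_{D} |D| \sum_{x \in \varphi \cap D} \widehat{\lambda(x)}.
\]

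The crucial step is then to interchange the order of summation, grouping the terms by the vertex $x \in \varphi \cap A$ rather than by the cell $D$, which is simply Fubini for the finite set of pairs $(D,x)$ with $x$ a vertex of $D$. For a fixed point $x$ this collects the total volume $\sum_{D \ni x} |D|$ of all Delaunay cells having $x$ as a vertex. Here I would invoke the identification of the contiguous Voronoi cell $W(x \mid \varphi \cap A)$ with precisely the union of these cells; since distinct Delaunay cells meet only in faces of dimension at most $d-1$, their volumes add up to $|W(x \mid \varphi \cap A)|$. Substituting the definition \eqref{e:DTFE}, each summand becomes $\frac{1}{d+1}\,\widehat{\lambda(x)}\,|W(x \mid \varphi \cap A)| = \frac{1}{d+1}(d+1) = 1$, so the total sum is $n(\varphi \cap A)$, as required.

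The main obstacle is the geometric bookkeeping in this interchange: one must verify that the volumes of the cells incident to a fixed point really do sum to the volume of its contiguous Voronoi cell, with neither overlap nor omission. This rests on the general-position assumptions of Section~\ref{S:lambda}, which guarantee that the cells are genuine $d$-simplices meeting only along lower-dimensional (null) faces, together with the definition of $W(x \mid \varphi \cap A)$ as the union of the Delaunay cells incident to $x$.
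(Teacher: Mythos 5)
Your proposal is correct and follows essentially the same route as the paper: decompose the integral over the Delaunay cells, where the estimator is constant and equal to the vertex average, then interchange the order of summation so that the volumes of the cells incident to a fixed point $x$ sum to $|W(x \mid \varphi \cap A)|$ and cancel against the definition of $\widehat{\lambda(x)}$. The only difference is that you spell out the geometric bookkeeping (null boundaries, points outside the convex hull, non-overlapping simplices) that the paper leaves implicit.
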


\begin{proof}
Write ${\cal{D}}(\varphi\cap A)$ for the family of Delaunay
cells defined by $\varphi \cap A$, and note that
\begin{eqnarray*}
\int_A \widehat{\lambda(x_0)} \, dx_0  & = & 
\sum_{D_j \in {\cal{D}}(\phi\cap A) } | D_j | \,
\left[ 
\sum_{x \in \varphi \cap D_j} \frac{1}{ | W( x \mid \varphi \cap A ) | }
\right] \\
& = &
\sum_{x \in \varphi \cap A }  \frac{1}{ | W( x \mid \varphi \cap A ) | }
\left[
\sum_{D_j \in {\cal{D}}(\phi\cap A) } 1\{ x \in D_j \} \, | D_j | \right]
= n(\varphi\cap A),
\end{eqnarray*}
cf.\ \cite[p.~62~ff.]{Scha07}.
\end{proof}

\section{Mean and variance of the Delaunay tessellation field estimator}

In this section, we derive the first two moments of the Delaunay tessellation
field estimator. Our first result concerns the expectation.

\begin{thm}
\label{t:DTFE-mean}
Let $\Phi$ be observed in a convex bounded Borel subset $A$, and, for a
point pattern $\varphi$ with $n(\varphi \cap A) \geq d+1$ in general 
quadratic position, set
\begin{equation}
\label{e:DTFEkernel}
g(x_0 \mid x, \varphi) := \frac{ 
\sum_{D_j \in {\cal{D}}(\phi\cap A) }
1\{ x_0 \in D_j^\circ; x \in D_j \}
 }{ | W(x \mid \varphi \cap A ) | },
\end{equation}
for $x_0 \in A \setminus \varphi$, $x\in \varphi$, and let
$g(x \mid x, \varphi) := (d+1) / |W(x \mid \varphi \cap A) |$ if 
$x\in \phi \cap A$. Then the Delaunay tessellation field
estimator defined by (\ref{e:interpolate}) and 
(\ref{e:DTFE}) has expectation
\[
\EE\left[ \widehat{\lambda(x_0)} \right] =
\int_A \EE_x \left[ g(x_0 \mid x, \Phi ) \right]
 \, \lambda(x) \, dx,
\]
where $\EE_x$ denotes the expectation with respect to the Palm 
distribution of $\Phi$ at $x$.
\end{thm}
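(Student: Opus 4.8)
The plan is to recognise the DTFE as a generalised weight function estimator --- a single sum over the points of $\Phi \cap A$ of the kernel $g$ --- and then to apply the Campbell--Mecke formula.

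First I would establish the pointwise identity
\[
\widehat{\lambda(x_0)} = \sum_{x \in \Phi \cap A} g(x_0 \mid x, \Phi),
\]
valid for every realisation $\varphi$ with $x_0 \notin \varphi$. To see this, substitute the definition (\ref{e:DTFEkernel}) of $g$ and interchange the order of the two finite summations over $x$ and over the Delaunay cells $D_j$, which gives
\[
\sum_{x \in \varphi \cap A} g(x_0 \mid x, \varphi)
= \sum_{D_j \in {\cal D}(\varphi \cap A)} 1\{x_0 \in D_j^\circ\}
\sum_{x \in \varphi \cap D_j} \frac{1}{|W(x \mid \varphi \cap A)|}.
\]
Under general quadratic position the interiors $D_j^\circ$ are pairwise disjoint, so $x_0$ lies in the interior of at most one Delaunay cell; the outer sum therefore collapses to the single cell $D(x_0 \mid \varphi \cap A)$ (and vanishes when $x_0$ lies in no cell, or when $n(\varphi\cap A) < d+1$), and since each cell has exactly $d+1$ vertices this reproduces the average (\ref{e:interpolate}). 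This step also quietly disposes of the degenerate cases: the convention that $\widehat{\lambda(x_0)} = 0$ outside the Delaunay tessellation is built into $g$.

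Next I would take expectations of both sides. Since $\Phi$ possesses an intensity function it has no atoms, so the fixed point $x_0$ lies in $\Phi$ with probability zero and the branch $x_0 \in A \setminus \varphi$ of $g$ applies almost surely. Applying the Campbell--Mecke theorem to the nonnegative function $(x, \varphi) \mapsto 1\{x \in A\}\, g(x_0 \mid x, \varphi)$ then yields
\[
\EE\left[ \sum_{x \in \Phi \cap A} g(x_0 \mid x, \Phi) \right]
= \int_A \EE_x\left[ g(x_0 \mid x, \Phi) \right] \lambda(x) \, dx,
\]
where the first order moment measure has been written as $\lambda(x)\,dx$ and $\EE_x$ is the Palm expectation at $x$. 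Combined with the identity above, this is precisely the claimed formula.

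The main obstacle is not the probabilistic machinery but the verification that $g(x_0 \mid \cdot\,, \cdot)$ is a jointly measurable function of the point $x$ and the configuration $\Phi$, as required for Campbell--Mecke. One must check that the Delaunay combinatorics (which cell interior contains the fixed $x_0$, and which points are its vertices) and the contiguous cell volume $|W(x \mid \Phi \cap A)|$ depend measurably on $\Phi$; this holds because all these quantities are determined by finitely many points of $\Phi \cap A$ through the sphere and half-space conditions of Section~\ref{S:lambda}, but it is worth stating explicitly. A secondary point is integrability: the nonnegativity of $g$ lets one invoke Tonelli's theorem, so no finiteness of $\EE[\widehat{\lambda(x_0)}]$ need be assumed in advance, the formula being valid with both sides possibly infinite.
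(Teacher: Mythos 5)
Your proposal is correct and follows essentially the same route as the paper: write $\widehat{\lambda(x_0)} = \sum_{x \in \Phi \cap A} g(x_0 \mid x, \Phi)$ and apply the Campbell--Mecke theorem. The paper states both steps without elaboration, so your verification of the sum-interchange identity and your remarks on measurability and Tonelli simply fill in details the author left implicit.
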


For patterns $\varphi$ with less than $d+1$ points falling in $A$, it 
is also possible to write 
$\widehat{\lambda(x_0)} = \sum_{x \in \varphi \cap A} g(x_0 \mid x, \varphi )$
with the function $g$ chosen to suit the particular type of edge correction 
adopted, see Section~\ref{S:prelim}.

\

\begin{proof}
Note that 
\[
\widehat{\lambda(x_0)} = \sum_{x \in \Phi \cap A} g(x_0 \mid x, \Phi ).
\]
Hence, by the Campbell--Mecke theorem \cite{SKM}, 
\[
\EE\widehat { \lambda(x_0) }  = \int_A \EE_x \left[
g(x_0 \mid x, \Phi ) \right] \, \lambda(x) \, dx.
\]
\end{proof}

Recall that the second order {\em factorial moment measure\/} 
$\mu^{(2)}$ is defined in integral terms by
\begin{equation}
\label{e:factorial}
\EE \left[ \sum_{x_1, x_2 \in \Phi}^{\neq} f(x_1, x_2) \right]
= \int \int f(x_1, x_2) \, d\mu^{(2)}(x_1, x_2)
\end{equation}
for any non-negative measurable function $f$. The sum is over all 
pairs of different points. We shall say that the second order factorial
moment measure exists, if it is locally finite. If furthermore $\mu^{(2)}$
is absolutely continuous with respect to the $2$-fold product measure of 
Lebesgue measure with itself, a Radon--Nikodym derivative exists known
as second order {\em product density\/} and denoted by $\rho^{(2)}$.
In this case, (\ref{e:factorial}) reduces to
\[
\int \int f(x_1, x_2) \, \rho^{(2)}(x_1, x_2) \, dx_1 \, dx_2.
\]

\begin{thm}
\label{t:DTFE-var}
Let $\Phi$ be observed in a convex bounded Borel subset $A$ and define the 
function $g$ by (\ref{e:DTFEkernel}). Assume that second order product 
densities exist. Then the Delaunay tessellation field estimator defined by 
(\ref{e:interpolate}) and (\ref{e:DTFE}) has variance
\begin{eqnarray*}
{\rm{Var}} ( \widehat{\lambda(x_0)} ) & = &
\int_A \int_A \EE^{(2)}_{x,y} \left[ 
    g(x_0 \mid x, \Phi) \, g(x_0 \mid y, \Phi ) 
\right] \rho^{(2)}(x,y) \, dx \, dy \\
& + &
\int_A \EE_x \left[ 
   g^2(x_0 \mid x, \Phi ) \right] \lambda(x) \, dx 
 - 
\left( \int_A \EE_x \left[ g(x_0 \mid x, \Phi ) \right] \,
\lambda(x) \, dx \right)^2,
\end{eqnarray*}
where $E^{(2)}_{x,y}$ denotes the two-fold Palm distribution of $\Phi$.
\end{thm}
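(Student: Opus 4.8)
The plan is to write the variance as $\EE[\widehat{\lambda(x_0)}^2] - (\EE[\widehat{\lambda(x_0)}])^2$, to borrow the mean directly from Theorem~\ref{t:DTFE-mean}, and to expand the second moment into diagonal and off-diagonal contributions that are handled by the first- and second-order Campbell--Mecke formulae respectively. First I would recall from the proof of Theorem~\ref{t:DTFE-mean} the single-sum representation $\widehat{\lambda(x_0)} = \sum_{x \in \Phi \cap A} g(x_0 \mid x, \Phi)$, valid for every realisation by the remark following that theorem. Squaring and writing the product of two sums as a double sum over ordered pairs, I would separate the terms with coinciding indices from the rest,
\[
\widehat{\lambda(x_0)}^2 =
\sum_{x \in \Phi \cap A} g^2(x_0 \mid x, \Phi)
+ \sum_{x, y \in \Phi \cap A}^{\neq} g(x_0 \mid x, \Phi)\, g(x_0 \mid y, \Phi).
\]

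Next I would take expectations of the two pieces in turn. The diagonal sum is an ordinary Campbell sum of the single function $g^2(x_0 \mid \cdot\,, \Phi)\, 1\{\,\cdot \in A\}$, so the Campbell--Mecke theorem gives $\int_A \EE_x[g^2(x_0 \mid x, \Phi)]\,\lambda(x)\,dx$, exactly as in Theorem~\ref{t:DTFE-mean} with $g$ replaced by $g^2$. The off-diagonal sum ranges over ordered pairs of distinct points and therefore calls for the second-order version of the Campbell--Mecke theorem: its expectation equals $\int\!\int \EE^{(2)}_{x,y}[\,g(x_0 \mid x, \Phi)\, g(x_0 \mid y, \Phi)\,]\, d\mu^{(2)}(x,y)$, and since the product density is assumed to exist this reduces, via (\ref{e:factorial}), to $\int_A \int_A \EE^{(2)}_{x,y}[g(x_0 \mid x, \Phi)\, g(x_0 \mid y, \Phi)]\,\rho^{(2)}(x,y)\,dx\,dy$. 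Subtracting $(\EE[\widehat{\lambda(x_0)}])^2$, identified by Theorem~\ref{t:DTFE-mean} as the final displayed term, then assembles the claimed expression.

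The step I expect to be the main obstacle is the correct invocation of the two-fold Palm calculus for the off-diagonal term. Both kernels depend on the \emph{entire} pattern through $W(\,\cdot \mid \Phi \cap A)$ and through the full Delaunay tessellation, not merely on the two distinguished points, so the relevant statement is the Campbell--Mecke rather than the bare Campbell theorem, with $\EE^{(2)}_{x,y}$ the expectation under the two-fold reduced Palm distribution. I would check that the integrand $f(x,y) = g(x_0 \mid x, \Phi)\, g(x_0 \mid y, \Phi)\, 1\{x,y \in A\}$ is non-negative and measurable so that the factorial-moment identity (\ref{e:factorial}) applies, and that the resulting double integral is finite, which is precisely where the hypothesis that second order product densities exist is used.
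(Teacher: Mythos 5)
Your proposal is correct and follows essentially the same route as the paper: expand $\widehat{\lambda(x_0)}^2$ into diagonal and off-diagonal sums, apply the Campbell--Mecke theorem to the former and the second-order (factorial) Campbell--Mecke formula with $\rho^{(2)}$ to the latter, and subtract the squared mean from Theorem~\ref{t:DTFE-mean}. The only quibble is terminological: since the integrand $g(x_0\mid x,\Phi)\,g(x_0\mid y,\Phi)$ is evaluated under a law in which $\Phi$ contains $x$ and $y$ (as the Poisson specialisation $g(x_0\mid x,\Phi\cup\{x,y\})$ in Corollary~\ref{c:Poisson} makes explicit), the relevant object is the two-fold Palm distribution rather than its \emph{reduced} version.
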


\begin{proof}
Remark that 
\[
\EE\left[ \widehat{\lambda(x_0)}^2 \right]  = 
\EE\left[ \sum_{x,y\in \Phi\cap A}^{\neq} g(x_0 \mid x, \Phi ) 
\, g(x_0 \mid y, \Phi ) \right]  + 
\EE \left[ \sum_{x\in\Phi\cap A} g^2(x_0 \mid x, \Phi ) \right] .
\]
The cross term on the right hand side is equal to 
\[
\int_A \int_A \EE^{(2)}_{x,y} \left[  
g(x_0 \mid x, \Phi) \, g(x_0 \mid y, \Phi)  \right]  \,
\rho^{(2)}(x,y) \, dx \, dy,
\]
see e.g.\ \cite{DaleVere88},
where $\EE^{(2)}_{x,y}$ denotes the two-fold Palm distribution of
$\Phi$ \cite{Hani82}.  
Another appeal to the Campbell--Mecke theorem yields 
\[
\EE \left[ \sum_{x\in\Phi\cap A} g^2(x_0 \mid x, \Phi ) \right]
= \int_A \EE_x \left[ g^2(x_0 \mid x, \Phi ) \right] \lambda(x) \, dx.
\]
Finally, the variance is obtained using Theorem~\ref{t:DTFE-mean}.
\end{proof}


%

In general, the integrals involved in 
Theorems~\ref{t:DTFE-mean}--\ref{t:DTFE-var} must be evaluated by
numerical or simulation methods. 

\section{Comparison to a classic estimator}

The classic estimator of intensity is the kernel estimator 
\begin{equation}
\label{e:BD}
\widehat{\lambda_{BD}(x_0)} := \frac{n(\Phi \cap b(x_0, h)\cap A)}{
   |b(x_0,h)\cap A|} , \quad x_0 \in A.
\end{equation}
proposed by Berman and Diggle \cite{BermDigg89}. The estimator can be
regarded as a kernel estimator \cite{Silv86} with $k_h(x_0 \mid x ) = 
1\{ || x - x_0 || < h \} / | b(x_0, h) \cap A |$, where $b(x_0, h)$ 
denotes the open ball around $x_0$ with radius $h>0$. The choice of 
bandwidth $h$ determines the amount of smoothing. 

Note that when the bounded observation window $A\neq \emptyset$ is open, 
one never divides by zero. In fact, a stronger statement can be made. The 
function $x \mapsto |b(x,h)\cap A|$ is continuous 
and attains its minimum on the closure $\bar A$. Since any point on the
boundary $\partial A$ always has a neighbour within distance $h$ in $A$,
$\inf_{x\in A} |b(x,h)\cap A| > 0$. Further details may be found e.g.\ in 
\cite{Cres91,Digg83,SKM}.

Although (\ref{e:BD}) is a natural estimator, it does not necessarily
preserve the total mass in $A$ \cite{Scha07}, nor is it based on a 
generalised weight function \cite{Silv86}. It is not hard to modify the 
edge correction in (\ref{e:BD}) to define an estimator \cite{Lies07}
that does preserve total mass and is based on a weight function.

\begin{defn}
\label{d:M}
Consider a point process $\Phi$ observed in an open bounded Borel 
subset $A$ of $\R^d$. For $x_0 \in A$, define
\begin{equation}
\label{e:M}
\widehat{\lambda_K(x_0)} := \sum_{x\in\Phi\cap A} \frac{
1\{ || x-x_0 || < h \}}{|b(x_,h)\cap A|}.
\end{equation}
\end{defn}

\begin{res} 
The estimator of Definition~\ref{d:M} is a generalised weight function
estimator with kernel $k_h(x_0 \mid x ) =  1\{ || x-x_0 || < h \} /
|b(x_,h)\cap A|$ that preserves total mass, that is,
\[
\int_A \widehat {\lambda_K(x_0)} \, dx_0 = n(\Phi \cap A), 
\]
the number of points of $\Phi$ in $A$.
\end{res}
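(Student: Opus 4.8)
The plan is to establish the two assertions in turn, both of which reduce to the single observation that the kernel $k_h(x_0 \mid x)$, viewed as a function of $x_0$, integrates to one over $A$ for every fixed data point $x \in A$. First I would recall that, following \cite{Silv86}, a generalised weight function estimator is one of the form $\widehat{\lambda_K(x_0)} = \sum_{x \in \Phi \cap A} k_h(x_0 \mid x)$ whose kernel is normalised so that $\int_A k_h(x_0 \mid x) \, dx_0 = 1$ for each $x$; comparison with (\ref{e:M}) shows that the estimator already has this form with the stated kernel, so everything hinges on verifying the normalisation.

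The key computation is to note that, as a function of $x_0$, the numerator $1\{ || x - x_0 || < h \}$ is simply the indicator that $x_0$ lies in the open ball $b(x, h)$. Hence
\[
\int_A k_h(x_0 \mid x) \, dx_0 = \frac{1}{|b(x, h) \cap A|} \int_A 1\{ || x - x_0 || < h \} \, dx_0 = \frac{|b(x, h) \cap A|}{|b(x, h) \cap A|} = 1,
\]
so each kernel is a density on $A$ and the estimator is a generalised weight function estimator. The structural feature being exploited is that the ball in the denominator of (\ref{e:M}) is centred at the data point $x$ rather than at the evaluation point $x_0$, which is exactly what makes the ball in the integrand coincide with the ball in the normalising constant. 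By contrast, in the Berman--Diggle estimator (\ref{e:BD}) the ball is centred at $x_0$, so the denominator varies with the integration variable and the normalisation fails---this is precisely why (\ref{e:BD}) need not preserve mass.

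Mass preservation then follows immediately: integrating (\ref{e:M}) over $A$ and interchanging the finite sum with the integral gives
\[
\int_A \widehat{\lambda_K(x_0)} \, dx_0 = \sum_{x \in \Phi \cap A} \int_A k_h(x_0 \mid x) \, dx_0 = \sum_{x \in \Phi \cap A} 1 = n(\Phi \cap A).
\]
The interchange is justified because the integrand is non-negative and $\Phi \cap A$ contains only finitely many points, $A$ being bounded and $\Phi$ locally finite. The only point requiring genuine care, and hence the main obstacle, is that the kernel be well defined, i.e.\ that $|b(x, h) \cap A| > 0$ for every $x \in \Phi \cap A$; this is guaranteed by the discussion preceding Definition~\ref{d:M}, where the hypothesis that $A$ is open and bounded yields $\inf_{x \in A} |b(x, h) \cap A| > 0$, so the denominators are bounded away from zero and no division by zero can occur.
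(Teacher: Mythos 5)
Your proof is correct and follows essentially the same route as the paper: both verify that $\int_A k_h(x_0 \mid x)\, dx_0 = 1$ for every $x\in A$ (so the estimator is a generalised weight function estimator) and then interchange the finite sum with the integral to obtain mass preservation. Your additional remarks on why the ball is centred at the data point and on the denominator being bounded away from zero are accurate but not needed beyond what the paper records.
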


\begin{proof}
Note that 
\[
\int_A k_h(x_0 \mid x ) \, dx_0 = 
\int_A \frac{1\{ || x-x_0 || < h \}}{|b(x,h)\cap A|} \; dx_0 \equiv 1
\]
for all $x\in A$, that is, $\widehat{\lambda_K(\cdot)}$ is a generalised
weight function estimator. Furthermore, for any realised point pattern 
$\varphi$, the restriction $\varphi \cap A$ in $A$ is finite and
\[
\int_A \left[ \sum_{x\in\varphi\cap A} \frac{
1\{ || x-x_0 || < h \}}{|b(x_,h)\cap A|} \right] dx_0 =
\sum_{x\in\varphi\cap A} 
\int_A \frac{1\{ || x-x_0 || < h \}}{|b(x_,h)\cap A|} \; dx_0
= n( \varphi \cap A ).
\]
\end{proof}

Note that the Delaunay tessellation field estimator is based on an
{\em adaptive kernel\/} (\ref{e:DTFEkernel}) as it depends on the
underlying point pattern. Indeed, for every $x\in A$,
\[
\int_A g(x_0 \mid x, \phi) \; dx_0 = 
\int_A \frac{ 
\sum_{D_j \in {\cal{D}}(\phi\cap A) }
1\{ x_0 \in D_j^\circ; x \in D_j \}
 }{ | W(x \mid \varphi \cap A ) | } \; dx_0 = 1.
\]
A clear advantage is that the problem of choosing the bandwidth is 
avoided.

In order to assess the quality of the estimator, we proceed to compute its
mean and variance. 

\begin{thm}
\label{t:kernel-mean}
Let $\Phi$ be observed in a bounded open Borel subset $A$. Then, the 
estimator of Definition~\ref{d:M} has expectation
\[
\EE\left[ \widehat{\lambda_K(x_0)} \right] =
\int_A \frac{ 1\{ x \in b(x_0,h) \}}{ |b(x,h)\cap A| } \; \lambda(x) \, dx .
\]
\end{thm}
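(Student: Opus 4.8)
The plan is to follow the same Campbell--Mecke route as in the proof of Theorem~\ref{t:DTFE-mean}, but to exploit a crucial simplification: unlike the adaptive DTFE kernel, the kernel $k_h(x_0 \mid x) = 1\{ \|x - x_0\| < h \}/|b(x,h) \cap A|$ of Definition~\ref{d:M} depends only on the individual point $x$ (together with the fixed data $x_0$, $h$ and $A$) and not on the remaining configuration. As a result the Palm expectation will trivialise, and the ordinary first-order Campbell formula suffices.

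First I would write the estimator in the form
\[
\widehat{\lambda_K(x_0)} = \sum_{x \in \Phi \cap A} k_h(x_0 \mid x),
\]
and check that $k_h$ is a legitimate (non-negative, measurable, bounded) integrand. Here the only point needing attention is that the denominator does not vanish: by the remark preceding Definition~\ref{d:M}, openness and boundedness of $A$ give $\inf_{x \in A} |b(x,h) \cap A| > 0$, so $k_h(x_0 \mid \cdot)$ is bounded on $A$ and hence integrable against the intensity $\lambda$ on the bounded set $A$.

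Next I would apply the Campbell--Mecke theorem, which expresses the expectation as $\int_A \EE_x[ k_h(x_0 \mid x) ]\,\lambda(x)\,dx$. Since $k_h(x_0 \mid x)$ is deterministic given $x$ --- it carries no dependence on $\Phi$ --- the Palm expectation $\EE_x$ leaves it unchanged, and the formula collapses to the plain Campbell integral
\[
\EE\left[ \widehat{\lambda_K(x_0)} \right] = \int_A \frac{ 1\{ \|x - x_0\| < h \} }{ |b(x,h) \cap A| }\,\lambda(x)\,dx.
\]

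Finally I would rewrite the indicator using the symmetry of the Euclidean ball, namely $1\{ \|x - x_0\| < h \} = 1\{ x \in b(x_0, h) \}$, to arrive at the stated expression. I do not expect any genuine obstacle: the result is considerably more transparent than Theorem~\ref{t:DTFE-mean} precisely because the non-adaptive kernel removes the need to evaluate a non-trivial Palm expectation.
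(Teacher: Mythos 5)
Your proposal is correct and follows essentially the same route as the paper: both write $\widehat{\lambda_K(x_0)}$ as a sum of the deterministic kernel $k_h(x_0\mid x)$ over $\Phi\cap A$ and apply the Campbell--Mecke theorem, with the Palm expectation trivialising because the kernel does not depend on the configuration. Your additional remarks on the non-vanishing denominator and measurability are sound but not needed beyond what the paper already notes before Definition~\ref{d:M}.
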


\begin{proof}
By the Campbell--Mecke theorem 
\[
\EE\left[ \widehat{\lambda_K(x_0)} \right] = 
\EE\left[ \sum_{x\in\Phi\cap A} \frac{
1\{ || x-x_0 || < h \}}{|b(x_,h)\cap A|} \right] =
\int_A \frac{ 1\{ || x-x_0 || < h \} }{|b(x,h)\cap A|} \; \lambda(x) \; dx.
\]
\end{proof}

If we compare Theorem~\ref{t:kernel-mean} to Theorem~\ref{t:DTFE-mean},
the Palm expectation $\EE_x \left[ g(x_0 \mid x, \Phi \right]$ is replaced by
$k_h(x_0 \mid x)$, as the latter does not depend on the point process $\Phi$.

\begin{thm}
\label{t:kernel-var}
let $\Phi$ be observed in a bounded open Borel subset $A$ and assume that 
second order product densities exist. Then
\[
{\rm{Var}} ( \widehat{\lambda_K(x_0)} )  = 
\int\int_{(b(x_0,h)\cap A)^2} 
   \frac{\rho^{(2)}(x, y) - \lambda(x) \lambda(y)}
        {|b(x,h)\cap A| \; |b(y,h)\cap A|} \; dx \, dy + 
\int_{b(x_0,h)\cap A} \frac{\lambda(x)}{|b(x,h)\cap A|^2} \; dx.
\]
\end{thm}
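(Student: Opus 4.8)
The plan is to follow the template of the proof of Theorem~\ref{t:DTFE-var}, but the calculation simplifies considerably because the kernel $k_h(x_0\mid x) = 1\{||x-x_0||<h\}/|b(x,h)\cap A|$ is deterministic: it does not depend on the realisation of $\Phi$, so no Palm expectations appear. First I would write the square of the estimator as a double sum and split off the diagonal,
\[
\widehat{\lambda_K(x_0)}^2 = \sum_{x,y\in\Phi\cap A}^{\neq} k_h(x_0\mid x)\,k_h(x_0\mid y) + \sum_{x\in\Phi\cap A} k_h^2(x_0\mid x),
\]
and then take expectations of the two terms separately.

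For the off-diagonal sum I would apply the definition~(\ref{e:factorial}) of the second order factorial moment measure, which under the standing assumption of existing product densities gives
\[
\EE\left[ \sum_{x,y\in\Phi\cap A}^{\neq} k_h(x_0\mid x)\,k_h(x_0\mid y) \right] = \int_A\int_A k_h(x_0\mid x)\,k_h(x_0\mid y)\,\rho^{(2)}(x,y)\,dx\,dy.
\]
Here the product of indicators $1\{||x-x_0||<h\}\,1\{||y-x_0||<h\}$ confines the integration to $(b(x_0,h)\cap A)^2$, and the two volume factors supply the denominator, yielding the $\rho^{(2)}$-part of the claimed expression. For the diagonal sum I would invoke the Campbell--Mecke theorem as in Theorem~\ref{t:DTFE-mean}; since the indicator is idempotent, $k_h^2(x_0\mid x) = 1\{||x-x_0||<h\}/|b(x,h)\cap A|^2$, and the integral collapses to
\[
\int_{b(x_0,h)\cap A} \frac{\lambda(x)}{|b(x,h)\cap A|^2}\,dx,
\]
which is exactly the second term of the statement.

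Finally I would subtract the square of the mean. By Theorem~\ref{t:kernel-mean} the expectation equals $\int_{b(x_0,h)\cap A} \lambda(x)/|b(x,h)\cap A|\,dx$, so its square is $\int\int_{(b(x_0,h)\cap A)^2} \lambda(x)\lambda(y)/(|b(x,h)\cap A|\,|b(y,h)\cap A|)\,dx\,dy$, over the very same product domain. Combining this with the off-diagonal term merges the two integrands into the numerator $\rho^{(2)}(x,y) - \lambda(x)\lambda(y)$, while the diagonal term is left untouched, and the asserted formula follows. I do not anticipate a genuine obstacle here: the only points requiring care are recognising $1\{||x-x_0||<h\}$ as $x\in b(x_0,h)$, so that the indicators restrict the double integral to a product set, together with the idempotence of the indicator in the diagonal term. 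The whole computation is a strictly easier instance of Theorem~\ref{t:DTFE-var}, the simplification stemming from the non-adaptive, deterministic nature of the kernel.
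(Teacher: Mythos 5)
Your proposal is correct and follows essentially the same route as the paper: split $\widehat{\lambda_K(x_0)}^2$ into the off-diagonal sum (handled via the second order factorial moment measure and its product density) and the diagonal sum (handled via the Campbell--Mecke theorem, using the idempotence of the indicator), then subtract the squared mean from Theorem~\ref{t:kernel-mean} and merge it with the $\rho^{(2)}$ term. No gaps.
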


\begin{proof}
Regarding the second moment, note that
\begin{eqnarray*}
\EE\left[ \widehat{\lambda_K(x_0)}^2 \right] & = & 
\EE\left\{ \sum_{x,y\in \Phi\cap A}^{\neq} \left[ 
\frac{1\{ || x-x_0 || < h \}}{|b(x,h)\cap A|} \;
\frac{1\{ || y-x_0 || < h \}}{|b(y,h)\cap A|} \right] \right\}  \\
& + &
\EE \left\{ \sum_{x\in\Phi\cap A} 
\left[ \frac{1\{ || x-x_0 || < h \}}{|b(x,h)\cap A|^2} \right] \right\}.
\end{eqnarray*}
Then rewrite the expectations as integrals with respect to $\rho^{(2)}$
and $\lambda$ respectively to obtain that the variance of 
$\widehat{\lambda_K(x_0)}$ is equal to
\[
\int_{b(x_0,h)\cap A} \int_{b(x_0,h)\cap A} \frac{1}
   {|b(x,h)\cap A| \; |b(y,h)\cap A|} \; \rho^{(2)}(x, y) \, dx \, dy 
 + 
\int_{b(x_0,h)\cap A} \frac{\lambda(x)}{|b(x,h)\cap A|^2} \; dx.
\]
An appeal to Theorem~\ref{t:kernel-mean} completes the proof.
\end{proof}

The result should be compared to that of Theorem~\ref{t:DTFE-var}.

Similar arguments as those in the proofs of Theorems~\ref{t:kernel-mean}
and \ref{t:kernel-var} applied to the classic Berman--Diggle estimator
(\ref{e:BD}) give mean
\[
\frac{1}{|b(x_0,h) \cap A|} \int_{b(x_0,h)\cap A} \lambda(x) \, dx
\]
and variance
\[
\frac{1}{|b(x_0,h) \cap A|^2} \left\{
\int_{b(x_0,h)\cap A)} \lambda(x) \, dx +
 \int_{(b(x_0,h)\cap A)^2} 
\left[ \rho^{(2)}(x, y) - \lambda(x) \, \lambda(y) \right] \, dx \, dy
\right\} .
\]
Note that for $x_0\in A\ominus b(0,2h)$ separated by $2h$ from the boundary
of $A$, no edge correction is necessary, and both kernel estimators are identical.


The disadvantage of kernel estimators is that they involve a bandwidth 
parameter $h$; the larger $h$, the smoother the estimated intensity function. 
For specific models, $h$ may be chosen by optimisation of the (integrated) 
mean squared error \cite{Digg83}. In practice, in a planar setting,
Diggle \cite{Digg83} recommends to choose $h$ proportional to $n^{-1/2}$, 
where $n$ is the observed number of points. For a fixed bandwidth, neither 
the Berman--Diggle estimator nor the modification of Definition~\ref{d:M} 
is universally better. For examples, the reader is referred to \cite{Lies07}.

\section{Intensity estimation for Poisson point processes}

In general, the integrals involved in 
Theorems~\ref{t:DTFE-mean}--\ref{t:kernel-var} have to be evaluated
numerically. An exception is the case where $\Phi$ is a Poisson point
process with a locally finite intensity function.

\begin{cor}
\label{c:Poisson}
Let $\Phi$ be a Poisson point process observed in a convex bounded Borel
subset $A$. Then,
\[
\EE\left[ \widehat{\lambda(x_0)} \right] =
\int_A \EE \left[ g(x_0 \mid x, \Phi \cup \{ x \} ) \right]
 \, \lambda(x) \, dx
\]
and
\begin{eqnarray*}
{\rm{Var}} ( \widehat{\lambda(x_0)} ) & = &
\int_A \int_A \EE \left[ g(x_0 \mid x, \Phi \cup \{ x, y \} ) \,
                         g(x_0 \mid y, \Phi \cup \{ x, y \} ) \right] 
\, \lambda(x) \, \lambda(y) \, dx \, dy \\
& + &
\int_A \EE \left[ 
   g^2(x_0 \mid x, \Phi \cup \{ x \}) \right] \lambda(x) \, dx 
 - 
\left( \int_A \EE \left[ g(x_0 \mid x, \Phi \cup \{ x \} ) \right] \,
\lambda(x) \, dx \right)^2.
\end{eqnarray*}
\end{cor}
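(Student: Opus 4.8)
The plan is to specialise Theorems~\ref{t:DTFE-mean} and~\ref{t:DTFE-var} to the Poisson case by invoking the Slivnyak--Mecke theorem, which identifies the Palm distributions of a Poisson process with the distribution of the process itself with the conditioning points added back in. Concretely, for a Poisson process with intensity function $\lambda$, the reduced Palm distribution $\EE_x$ coincides with the law of $\Phi \cup \{x\}$, and the two-fold reduced Palm distribution $\EE^{(2)}_{x,y}$ coincides with the law of $\Phi \cup \{x,y\}$. Substituting these identifications directly into the expressions already established will yield the stated formulas.

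First I would recall the statement of the Slivnyak--Mecke theorem: for a Poisson process, $\EE_x[F(\Phi)] = \EE[F(\Phi \cup \{x\})]$ for the (reduced) Palm expectation at $x$, and analogously $\EE^{(2)}_{x,y}[F(\Phi)] = \EE[F(\Phi \cup \{x,y\})]$ for the two-fold case. Applying the first identity to $F(\Phi) = g(x_0 \mid x, \Phi)$ in the mean formula of Theorem~\ref{t:DTFE-mean} immediately gives the claimed expression for $\EE[\widehat{\lambda(x_0)}]$. For the variance, the cross term in Theorem~\ref{t:DTFE-var} carries the two-fold Palm expectation $\EE^{(2)}_{x,y}$, which the Slivnyak--Mecke theorem converts into an ordinary expectation over $\Phi \cup \{x,y\}$; simultaneously, for a Poisson process the product density factorises as $\rho^{(2)}(x,y) = \lambda(x)\,\lambda(y)$, so the weight $\rho^{(2)}(x,y)\,dx\,dy$ becomes $\lambda(x)\,\lambda(y)\,dx\,dy$. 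The remaining two terms of the variance are handled by the single-point identity exactly as in the mean.

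The steps, in order, are: (i) state Slivnyak--Mecke and the Poisson factorisation $\rho^{(2)} = \lambda \otimes \lambda$; (ii) apply the one-point identity to the mean expression from Theorem~\ref{t:DTFE-mean}; (iii) apply the two-point identity together with the factorisation to the cross term of Theorem~\ref{t:DTFE-var}; and (iv) apply the one-point identity to the self-term $\EE_x[g^2]$ and to the squared mean, then collect. Since Theorems~\ref{t:DTFE-mean} and~\ref{t:DTFE-var} are already available, this is essentially a direct substitution and the argument is short.

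The only point requiring minor care -- and the closest thing to an obstacle -- is bookkeeping of the reduced versus non-reduced Palm distributions and the corresponding convention for whether the conditioning points $x$, $y$ are included in the argument of $g$. One must check that the version of Slivnyak--Mecke being used matches the convention under which $g(x_0 \mid x, \Phi)$ is evaluated in the earlier theorems, so that the added points land inside the argument of $g$ exactly as written, namely $g(x_0 \mid x, \Phi \cup \{x\})$ and $g(x_0 \mid x, \Phi \cup \{x,y\})$. Once this alignment is confirmed, the substitution is mechanical and the corollary follows.
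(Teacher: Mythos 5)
Your proposal is correct and follows essentially the same route as the paper: both invoke the Slivnyak--Mecke identification of the (one- and two-fold) Palm distributions of a Poisson process with the law of $\Phi$ augmented by the conditioning points, together with the factorisation $\rho^{(2)}(x,y)=\lambda(x)\,\lambda(y)$, and then substitute directly into Theorems~\ref{t:DTFE-mean} and~\ref{t:DTFE-var}. Your added remark about matching the reduced-Palm convention to the way $g$ is written is a sensible bookkeeping check but does not change the argument.
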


\begin{proof}
For a Poisson process, the Palm distribution at $x$  is equal to
the superposition of its distribution $\PP$ with an extra point at $x$,
the two-fold Palm distribution $\PP^{(2)}_{x,y}$ is the superposition
of $\PP$ with $x$ and $y$. Furthermore, 
$\rho^{(2)}(x, y) = \lambda(x) \, \lambda(y)$ is a product density. 
Plugging these results into the expressions of 
Theorems~\ref{t:DTFE-mean}--\ref{t:DTFE-var} completes the proof.
\end{proof}

\begin{cor}
let $\Phi$ be a Poisson point process observed in a bounded open Borel 
subset $A$ and assume that second order product densities exist. Then,
\[
{\rm{Var}} ( \widehat{\lambda_K(x_0)} )  = 
\int_{b(x_0,h)\cap A} \frac{\lambda(x)}{|b(x,h)\cap A|^2} \; dx.
\]
\end{cor}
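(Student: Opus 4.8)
The plan is to specialise Theorem~\ref{t:kernel-var} to the Poisson case, in exact analogy with how Corollary~\ref{c:Poisson} specialises Theorems~\ref{t:DTFE-mean}--\ref{t:DTFE-var}. The single structural fact I would invoke is that for a Poisson point process the second order product density factorises as $\rho^{(2)}(x,y) = \lambda(x)\,\lambda(y)$; this property is already recorded in the proof of Corollary~\ref{c:Poisson} and is the hallmark of the complete independence of the process over disjoint regions.

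First I would write down the general variance formula from Theorem~\ref{t:kernel-var},
\[
{\rm{Var}}(\widehat{\lambda_K(x_0)}) =
\int\int_{(b(x_0,h)\cap A)^2}
\frac{\rho^{(2)}(x,y)-\lambda(x)\,\lambda(y)}{|b(x,h)\cap A|\,|b(y,h)\cap A|}\,dx\,dy
+ \int_{b(x_0,h)\cap A}\frac{\lambda(x)}{|b(x,h)\cap A|^2}\,dx,
\]
and then substitute the Poisson factorisation into the numerator of the first integrand. Since $\rho^{(2)}(x,y)-\lambda(x)\,\lambda(y)$ then vanishes identically, the entire double integral drops out, leaving only the diagonal contribution $\int_{b(x_0,h)\cap A}\lambda(x)/|b(x,h)\cap A|^2\,dx$, which is precisely the claimed expression.

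There is no genuine obstacle here: the result is a direct consequence of the factorisation of $\rho^{(2)}$, which eliminates the covariance term that would otherwise couple pairs of distinct points. The only point one might pause on is the standing hypothesis that second order product densities exist, required so that Theorem~\ref{t:kernel-var} applies; for a Poisson process with locally finite intensity this holds automatically with $\rho^{(2)}(x,y)=\lambda(x)\,\lambda(y)$, so the assumption is in fact redundant but harmless. I would therefore present the argument in essentially a single line, mirroring the brevity of the DTFE Poisson corollary, simply noting that the cross term vanishes under the product-density factorisation.
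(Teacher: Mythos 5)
Your proposal is correct and matches the paper's own proof exactly: the paper simply notes that $\rho^{(2)}(x,y)=\lambda(x)\,\lambda(y)$ for the Poisson process and applies Theorem~\ref{t:kernel-var}, so the covariance term vanishes. Your additional remark that the existence of second order product densities is automatic in the Poisson case is a harmless and accurate observation.
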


\begin{proof}
Use that $\rho^{(2)}(x, y) = \lambda(x) \, \lambda(y)$ and apply
Theorem~\ref{t:kernel-var}.
\end{proof}

The variance of the Berman--Diggle estimator is 
\(
\int_{b(x_0,h)\cap A} {\lambda(x)} \, dx / {|b(x_0,h)\cap A|^2} .
\)

\medskip

For stationary Poisson processes, even more can be said. In the
remainder of this section, define $g$ as in (\ref{e:DTFEkernel}) with 
$A = \R^d$.

\begin{thm}
\label{t:unbiased}
Let $\Phi$ be a stationary Poisson point process in $\R^d$ with intensity 
$\lambda > 0$. Then, the Delaunay tessellation field estimator  
$\widehat{ \lambda(0) }$ is asymptotically unbiased. 
\end{thm}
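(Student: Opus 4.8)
The plan is to compute $\EE[\widehat{\lambda(0)}]$ exactly for the edge-effect-free estimator (the one built from $g$ with $A=\R^d$, as fixed just before the theorem) and to show that it equals $\lambda$. Since this full-plane estimator is the large-window limit of the windowed estimators, exact unbiasedness here is precisely the asserted asymptotic unbiasedness. First I would use the representation $\widehat{\lambda(0)} = \sum_{x\in\Phi} g(0\mid x,\Phi)$, valid almost surely because $0\notin\Phi$ with probability one. Applying the Mecke equation for the Poisson process (the $A=\R^d$ form of Corollary~\ref{c:Poisson}, legitimate because the constant intensity yields a $\sigma$-finite intensity measure and the integrand is non-negative) gives
\[
\EE\!\left[\widehat{\lambda(0)}\right] = \lambda \int_{\R^d} \EE\!\left[ g(0\mid x,\Phi\cup\{x\}) \right] dx .
\]

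The second step is to exploit stationarity. Translating the whole configuration by $-x$ carries the Delaunay tessellation of $\Phi\cup\{x\}$ to that of $(\Phi-x)\cup\{0\}$, sends the events $\{0\in D_j^\circ\}$ and $\{x\in D_j\}$ to $\{-x\in(D_j-x)^\circ\}$ and $\{0\in D_j-x\}$, and preserves the contiguous-cell volume; since $\Phi-x$ has the same law as $\Phi$, this yields $\EE[g(0\mid x,\Phi\cup\{x\})] = \EE[g(-x\mid 0,\Phi\cup\{0\})]$. I would then move the $x$-integration inside the expectation by Tonelli (again free, since $g\ge 0$) and change variables $x_0=-x$, reducing the claim to
\[
\EE\!\left[\widehat{\lambda(0)}\right] = \lambda\, \EE\!\left[ \int_{\R^d} g(x_0\mid 0,\Phi\cup\{0\})\, dx_0 \right].
\]

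The final step is the normalisation $\int_{\R^d} g(x_0\mid 0,\Phi\cup\{0\})\, dx_0 = 1$, the unbounded-window counterpart of the adaptive-kernel identity noted earlier for bounded $A$; substituting it gives $\EE[\widehat{\lambda(0)}]=\lambda$ and, as a by-product, finiteness of the mean. I expect the main obstacle to be justifying this normalisation for the \emph{infinite} tessellation rather than for a bounded window: one must check that, almost surely, the added point $0$ has finitely many Voronoi neighbours, so that its contiguous Voronoi cell $W(0\mid\Phi\cup\{0\})$ is a bounded, finite union of Delaunay simplices over which the tiling argument behind Lemma~\ref{l:mass} remains valid. This is a standard almost-sure regularity property of the Poisson--Delaunay tessellation, but it is the one place where the passage from bounded $A$ to $A=\R^d$ carries genuine content; the interchanges of expectation and integration, by contrast, cost nothing because every integrand in sight is non-negative.
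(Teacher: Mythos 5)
Your proposal is correct and follows essentially the same route as the paper: the Mecke/Campbell--Mecke identity for the Poisson process, a translation by $-x$ using stationarity, Fubini/Tonelli to integrate out $x$, and the observation that the Delaunay cells containing $0$ tile the contiguous Voronoi cell $W(0\mid\Phi\cup\{0\})$ so that the kernel integrates to one. The only (welcome) addition is your explicit remark that one must check the contiguous cell of the added point is almost surely bounded, which the paper uses implicitly.
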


\begin{proof} 
Let $b(x, y_1, \dots, y_d)$ be the open ball spanned by the points $x$, 
$y_1, \dots, y_d$ on its topological boundary, and let
$D^\circ(x, y_1, \dots, y_d)$ be the {\em open\/} simplex that is the
interior of the convex hull of $\{ x, y_1, \dots, y_d \}$.  Recall that 
the points $x, y_1, \dots, y_d$ define a Voronoi vertex, or, equivalently, 
a Delaunay cell if and only if there are no points in $b(x, y_1, \dots, y_d)$. 

By Corollary~\ref{c:Poisson}, asymptotically
\begin{eqnarray*}
\EE \left[ \widehat{ \lambda(0) } \right] & = &
\lambda \int_{\R^d} \EE \left[ g(0 \mid x, \Phi \cup \{ x \}) \right] \, dx \\
& = &
\lambda \, \int \EE \left[ \sum^{\neq}_{\{y_1, \dots, y_d\} \subset \Phi}
\frac{ 1\{ 0 \in D^\circ(x, y_1, \dots, y_d); 
           b(x, y_1, \dots, y_d) \cap ( \Phi \cup \{ x \} ) = \emptyset \}
}{ |W( x \mid  \Phi \cup \{ x \}) | } \right]  dx \\
& = &
\lambda \, \int \EE \left[ \sum^{\neq}_{\{y_1, \dots, y_d\} \subset \Phi}
\frac{ 1\{ 0 \in D^\circ(x, y_1, \dots, y_d); 
           b(x, y_1, \dots, y_d) \cap \Phi = \emptyset \}
}{ |W( x \mid  \Phi \cup \{ x \}) | } \right]  dx \\
& = & 
\lambda \, \int \EE \left[ \sum^{\neq}_{\{z_1, \dots, z_d\} \subset \Phi_{-x}}
\frac{ 1\{ -x \in D^\circ(0, z_1, \dots, z_d); 
           b(0, z_1, \dots, z_d) \cap \Phi_{-x} = \emptyset \}
}{ |W( 0 \mid  \Phi_{-x} \cup \{ 0 \}) | } \right]  dx \\
& = & 
\lambda \, \int \EE \left[ \sum^{\neq}_{\{z_1, \dots, z_d\} \subset \Phi}
\frac{ 1\{ -x \in D^\circ(0, z_1, \dots, z_d); 
           b(0, z_1, \dots, z_d) \cap \Phi = \emptyset \}
}{ |W( 0 \mid  \Phi\cup \{ 0 \}) | } \right]  dx
\end{eqnarray*}
by stationarity. Hence, by Fubini's theorem,
\begin{eqnarray*}
\EE\left[ \widehat{ \lambda(0) } \right] & = &  
\lambda \, \EE \left[ 
\frac{ \sum^{\neq}_{\{z_1, \dots, z_d\} \subset \Phi}
 | D^\circ(0, z_1, \dots, z_d) | \, 
   1\{ b(0, z_1, \dots, z_d) \cap \Phi = \emptyset \} 
}{ |W( 0 \mid  \Phi\cup \{ 0 \}) | } 
\right]  \\
& = & \lambda \, \EE \left[ 
\frac{ |W( 0 \mid  \Phi\cup \{ 0 \}) | }{ |W( 0 \mid  \Phi\cup \{ 0 \}) | } 
\right] = \lambda.
\end{eqnarray*}
\end{proof}


The asymptotic variance of the Delaunay tessellation field estimator increases 
quadratically with $\lambda$ with a constant multiplier that depends on the 
dimension. The proof rests on the following two lemmata.

\begin{res}
\label{e:C}
Let $\Phi$ be a stationary Poisson point process in $\R^d$ with intensity 
$\lambda > 0$. Then,
\begin{eqnarray*}
C(\lambda, d) & := &\int \int \EE \left[ g(0 \mid x, \Phi \cup \{ x, y \} ) \,
                         g(0 \mid y, \Phi \cup \{ x, y \} ) \right] 
\, \lambda(x) \, \lambda(y) \, dx \, dy \\
& = & 
\lambda^2 \, \int 
\EE_1 \left[ 
\frac{ 
|W( 0 \mid  \Phi \cup \{ 0, x \}) \cap
 W( x \mid  \Phi \cup \{ 0, x \}) |
}{
|W( 0 \mid  \Phi \cup \{ 0, x \}) | \, 
|W( x \mid  \Phi \cup \{ 0, x \}) | } 
\right]  dx,
\end{eqnarray*}
where $\EE_1$ denotes expectation with respect to a unit intensity Poisson
point process.
\end{res}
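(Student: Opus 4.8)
The plan is to evaluate the cross-term $C(\lambda,d)$ from Corollary~\ref{c:Poisson} by expanding each factor $g(0\mid x,\Phi\cup\{x,y\})$ using the definition (\ref{e:DTFEkernel}), then exploiting stationarity and Fubini's theorem exactly as in the proof of Theorem~\ref{t:unbiased}. The starting point is the Poisson variance formula, where the cross-term is
\[
C(\lambda,d)=\lambda^2\int\int \EE\left[g(0\mid x,\Phi\cup\{x,y\})\,g(0\mid y,\Phi\cup\{x,y\})\right]\,dx\,dy,
\]
since $\lambda(x)\equiv\lambda$ by stationarity. Writing each $g$ as a sum over Delaunay simplices incident to the respective vertex, the product becomes a double sum indexed by two simplices—one containing $x$, one containing $y$—each weighted by reciprocal contiguous-cell volumes and carrying the indicator that $0$ lies in the open simplex and that the spanning ball is empty of points.

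First I would mirror the change of variables $z_i=y_i-x$ (or the analogous recentring) used in the mean computation to shift one of the vertices to the origin, converting the indicator $1\{0\in D^\circ(x,y_1,\dots,y_d)\}$ into $1\{-x\in D^\circ(0,\dots)\}$ and using stationarity of $\Phi$ to replace $\Phi_{-x}$ by $\Phi$. The key structural observation is that the two $g$-factors share the point configuration $\Phi\cup\{0,x\}$: one simplex has $0$ as a vertex and contributes $1/|W(0\mid\Phi\cup\{0,x\})|$, the other has $x$ as a vertex and contributes $1/|W(x\mid\Phi\cup\{0,x\})|$. The crucial simplification is that integrating the positional variable over $\R^d$ collapses the product of the two open-simplex indicators into a volume: summing $|D^\circ|$ over all Delaunay simplices incident to a vertex reconstructs the contiguous Voronoi cell volume $|W|$, just as in Theorem~\ref{t:unbiased}. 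Carrying out the integration that governs where $0$ sits relative to both simplices yields the volume of the \emph{intersection} $W(0\mid\Phi\cup\{0,x\})\cap W(x\mid\Phi\cup\{0,x\})$ in the numerator, which is precisely the claimed form.

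After the change of variables and applying Fubini to pull the $dx$ integration inside the expectation, I would rescale to unit intensity: a stationary Poisson process of intensity $\lambda$ is a scaled copy of the unit-intensity process, and since $g$ involves a ratio that is homogeneous of degree zero in the volumes (the simplex-volume numerator and the contiguous-cell denominator scale identically under dilation), the $\lambda^2$ prefactor survives while the expectation reduces to $\EE_1$. This justifies the final displayed expression with $\EE_1$ and the surviving single integral over $x$.

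The main obstacle I anticipate is bookkeeping the two coupled sums over pairs of Delaunay simplices and verifying that only the geometrically consistent pairs survive—namely that after integrating out the location of $0$, the two open-simplex indicators combine to give the intersection $W(0\mid\cdot)\cap W(x\mid\cdot)$ rather than some larger or smaller region. One must be careful that a point $0$ can lie in a simplex incident to $0$ only trivially, so the relevant contribution comes from configurations where $0$ falls in a common Delaunay cell shared by both contiguous cells, or more precisely where the recentring aligns the two incidence indicators. Establishing that the accumulated simplex volumes assemble exactly into the intersection volume—neither overcounting shared cells nor omitting boundary simplices (a null-set issue, harmless under the absolute-continuity assumptions)—is the delicate step; the scaling and Fubini manipulations are then routine.
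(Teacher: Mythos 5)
Your proposal follows essentially the same route as the paper's proof: expand each $g$ via (\ref{e:DTFEkernel}), recentre using stationarity, rescale to unit intensity, and apply Fubini to convert the indicator $1\{-x \in D^\circ(0, y, z)\}$ into a simplex volume, after which the volumes of the Delaunay cells having both marked points as vertices sum exactly to $|W(0 \mid \Phi\cup\{0,y\}) \cap W(y \mid \Phi\cup\{0,y\})|$. The step you flag as the ``delicate'' main obstacle --- that only geometrically consistent simplex pairs survive in the double sum --- is in fact immediate and is the opening observation of the paper's argument: almost surely $0$ lies in the interior of exactly one Delaunay cell, and the open cell interiors are disjoint, so the product of the two indicators forces the two simplices to coincide, collapsing the double sum to a single sum over $(d-1)$-point subsets $z$ of $\Phi$ with $0\in D^\circ(x,y,z)$ and $b(x,y,z)\cap\Phi=\emptyset$. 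One small correction to your scaling argument: the integrand is not homogeneous of degree zero under dilation --- the ratio $|A\cap B|/(|A|\,|B|)$ acquires a factor $\lambda^{-1}$ when all points are scaled by $\lambda^{1/d}$ --- and it is the Jacobian of the accompanying change of variables in the remaining spatial integral that restores the balance and leaves the prefactor $\lambda^2$; the conclusion is unaffected.
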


By the Nguyen--Zessin formula \cite{Lies00}, alternatively
\begin{eqnarray*}
C(\lambda, d) & = & \lambda \, \EE \left[
\frac{1}{
|W( 0 \mid  \Phi \cup \{ 0 \}) |}
\sum_{y\in \sN(0 \mid \Phi \cup \{ 0 \})}
\frac{
|W( 0 \mid  \Phi \cup \{ 0\}) \cap W( y \mid  \Phi \cup \{ 0 \}) | 
}{
|W( y \mid  \Phi \cup \{ 0\}) |
}
\right] \\
& = & \lambda^2 \, \EE_1 \left[
\frac{1}{
|W( 0 \mid  \Phi \cup \{ 0 \}) |}
\sum_{y\in \sN(0 \mid \Phi \cup \{ 0 \})}
\frac{
|W( 0 \mid  \Phi \cup \{ 0\}) \cap W( y \mid  \Phi \cup \{ 0 \}) | 
}{
|W( y \mid  \Phi \cup \{ 0\}) |
}
\right] .
\end{eqnarray*}

\

\begin{proof}
Write $\Phi_{d-1}$ for sets of $d-1$ distinct points in $\Phi$. Then, 
as $\lambda(x) \equiv \lambda$ is constant, and 
$g(0 \mid x, \Phi \cup \{ x, y \} ) \, g(0 \mid y, \Phi \cup \{ x, y \} )$
vanishes when $x$ and $y$ do not belong to the same Delaunay cell
containing $0$ in its interior,
\[
C(\lambda, d)  = 
\lambda^2 \, \int \int 
\EE \left[ 
\sum_{z \in \Phi_{d-1}} \frac{ 
1\{ 0 \in D^\circ(x, y, z ); b(x, y, z) \cap \Phi = \emptyset \}
}{
|W( x \mid  \Phi \cup \{ x, y \}) | \, |W( y \mid  \Phi \cup \{ x, y \}) | } 
\right]  dx \, dy 
\]
\[
 = 
\lambda^2 \, \int \int 
\EE \left[ 
\sum_{z \in \Phi_{-x;d-1}}
\frac{ 
1\{ -x \in D^\circ(0, y-x, z ); b(0, y-x, z) \cap \Phi_{-x} = \emptyset \}
}{
|W( 0 \mid  \Phi_{-x} \cup \{ 0, y-x \}) | \, 
|W( y -x \mid  \Phi_{-x} \cup \{ 0, y-x \}) | } 
\right]  dx \, dy .
\]
Because of stationarity,
\begin{eqnarray*}
C(\lambda, d) & = &
\lambda^2 \, \int \int 
\EE \left[ 
\sum_{z\in \Phi_{d-1}}
\frac{ 1\{ -x \in D^\circ(0, y-x, z); b(0, y-x, z) \cap \Phi = \emptyset \}
}{
|W( 0 \mid  \Phi \cup \{ 0, y-x \}) | \, 
|W( y -x \mid  \Phi \cup \{ 0, y-x \}) | } 
\right]  dx \, dy \\
& = &
\lambda^2 \, \int \int 
\EE \left[ 
\frac{ \sum_{z\in \Phi_{d-1}}
1\{ -x \in D^\circ(0, y, z); b(0, y, z) \cap \Phi = \emptyset \}
}{
|W( 0 \mid  \Phi \cup \{ 0, y \}) | \, 
|W( y \mid  \Phi \cup \{ 0, y \}) | } 
\right]  dx \, dy.
\end{eqnarray*}
Scaling by $\lambda^{1/d}$ yields that $\lambda^{-2} \, C(\lambda, d)$ is 
equal to
\[
\int\int \EE \left[ 
\frac{ \sum_{z\in \Phi_{d-1}}
1\{ - \lambda^{1/d} x \in D^\circ(0, \lambda^{1/d} y, \lambda^{1/d} z); 
     b(0, \lambda^{1/d} y, \lambda^{1/d} z) \cap \lambda^{1/d} \Phi = \emptyset \}
}{
\lambda^{-1} |W( 0 \mid  \lambda^{1/d} \Phi \cup \{ 0, \lambda^{1/d} y \}) | \, 
\lambda^{-1} |W( \lambda^{1/d} y \mid  \lambda^{1/d} \Phi \cup \{ 0,
\lambda^{1/d} y \}) | } \right] dx \, dy.
\]
Since $\lambda^{1/d} \Phi$ is a unit intensity Poisson point process, we obtain
\[
\lambda^{-2} \, C(\lambda, d) = 
\int\int \EE_1 \left[
\frac{ \sum_{ z\in \Phi_{d-1}}
 1\{ - x \in D^\circ(0, y, z);  b(0, y, z) \cap  \Phi = \emptyset \}
}{
 |W( 0 \mid  \Phi \cup \{ 0, y \}) | \, 
 |W( y \mid  \Phi \cup \{ 0, y \}) | } \right] dx \, dy.
\]
An appeal to Fubini's theorem to integrate out over $x$ completes the proof.
\end{proof}

\begin{res}
\label{e:C'}
Let $\Phi$ be a stationary Poisson point process in $\R^d$ with intensity 
$\lambda > 0$. Then,
\[
C^\prime(\lambda, d) := 
\int \EE \left[ g^2(0 \mid x, \Phi \cup \{ x \}) \right] \lambda(x) \, dx 
=
\lambda^2 \, \EE_1 \left[ \frac{1}{ |W( 0 \mid  \Phi\cup \{ 0 \}) | } 
\right].
\]
where $\EE_1$ denotes expectation with respect to a unit intensity Poisson
point process.
\end{res}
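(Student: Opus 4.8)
The plan is to follow, almost verbatim, the sequence of manipulations used in the proof of Theorem~\ref{t:unbiased}, the only change being that the kernel now enters squared. I would start from the defining integral $C'(\lambda,d) = \int \EE[g^2(0\mid x,\Phi\cup\{x\})]\,\lambda\,dx$, already in the Poisson--Palm form of Corollary~\ref{c:Poisson}, using that $\lambda(x)\equiv\lambda$. The key preliminary observation is that, for a pattern in general quadratic position, $0$ lies in the interior of a unique Delaunay cell, so the sum defining $g(0\mid x,\cdot)$ in (\ref{e:DTFEkernel}) has at most one nonzero term and is an idempotent $0/1$ quantity. Squaring therefore leaves the numerator untouched and merely squares the denominator:
\[
g^2(0\mid x,\Phi\cup\{x\}) = \sum^{\neq}_{\{y_1,\dots,y_d\}\subset\Phi}
\frac{1\{0\in D^\circ(x,y_1,\dots,y_d);\, b(x,y_1,\dots,y_d)\cap(\Phi\cup\{x\})=\emptyset\}}{|W(x\mid\Phi\cup\{x\})|^2}.
\]
As in Theorem~\ref{t:unbiased}, since $x$ lies on the boundary of $b(x,y_1,\dots,y_d)$ and not in its interior, the intersection with $\Phi\cup\{x\}$ may be replaced by the intersection with $\Phi$.

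Next I would translate the configuration by $-x$ and appeal to stationarity, exactly along the displayed chain in Theorem~\ref{t:unbiased}, moving the reference point to the origin and turning the membership indicator into $-x\in D^\circ(0,z_1,\dots,z_d)$ with $z_i=y_i-x$. Fubini's theorem then integrates out $x$, since for fixed $\Phi$ and fixed $\{z_1,\dots,z_d\}$ one has $\int 1\{-x\in D^\circ(0,z_1,\dots,z_d)\}\,dx = |D^\circ(0,z_1,\dots,z_d)|$. This produces
\[
C'(\lambda,d) = \lambda\,\EE\left[
\frac{\sum^{\neq}_{\{z_1,\dots,z_d\}\subset\Phi} |D^\circ(0,z_1,\dots,z_d)|\,1\{b(0,z_1,\dots,z_d)\cap\Phi=\emptyset\}}{|W(0\mid\Phi\cup\{0\})|^2}\right].
\]

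The decisive step is the same mass identity already exploited at the end of Theorem~\ref{t:unbiased}: the simplices $D(0,z_1,\dots,z_d)$ with empty circumball are precisely the Delaunay cells incident to $0$, so their volumes sum to $|W(0\mid\Phi\cup\{0\})|$. The numerator thus equals $|W(0\mid\Phi\cup\{0\})|$, cancelling one factor of the squared denominator and leaving $C'(\lambda,d)=\lambda\,\EE[\,1/|W(0\mid\Phi\cup\{0\})|\,]$. Finally I would rescale by $\lambda^{1/d}$ as in Lemma~\ref{e:C}: the image $\lambda^{1/d}\Phi$ is a unit intensity Poisson process and $d$-volumes transform by the factor $\lambda$, so $|W(0\mid\Phi\cup\{0\})|=\lambda^{-1}|W(0\mid\lambda^{1/d}\Phi\cup\{0\})|$, converting $\lambda\,\EE[\cdots]$ into $\lambda^2\,\EE_1[\cdots]$ and yielding the claimed formula.

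I expect the main obstacle to be the bookkeeping of the powers of $\lambda$ rather than any genuine conceptual difficulty: one must keep careful track of the single factor $\lambda$ from the intensity, the loss of one power of $|W|$ through the volume identity, and the extra factor of $\lambda$ generated by the volume scaling, so that these contributions combine to exactly $\lambda^2$. The two geometric ingredients — idempotence of the cell indicator (so that squaring affects only the denominator) and additivity of the incident simplex volumes into the contiguous cell — are both already secured in the preceding arguments and may be quoted directly.
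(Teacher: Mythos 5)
Your proposal is correct and follows essentially the same route as the paper's proof: idempotence of the Delaunay-cell indicator (so that squaring only squares $|W|$ in the denominator), translation to the origin via stationarity, Fubini to integrate out $x$, the additivity of incident simplex volumes into $|W(0\mid\Phi\cup\{0\})|$, and rescaling by $\lambda^{1/d}$ to a unit intensity process. The only (immaterial) difference is the order of operations --- the paper rescales to unit intensity before applying Fubini and the volume identity, whereas you rescale last --- and the $\lambda$-bookkeeping comes out to $\lambda^2$ either way.
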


\begin{proof}
Using $\lambda(x) = \lambda$ and argueing as in the proof of 
Theorem~\ref{t:unbiased}, we get
\begin{eqnarray}
\nonumber
C^\prime(\lambda, d) 
& = & \lambda \, \int \EE \left[ 
\left( \sum^{\neq}_{\{z_1, \dots, z_d\} \subset \Phi}
\frac{ 1\{ -x \in D^\circ(0, z_1, \dots, z_d); 
           b(0, z_1, \dots, z_d) \cap \Phi = \emptyset \}
}{ |W( 0 \mid  \Phi\cup \{ 0 \}) | } \right)^2 \right]  dx \\
& = &
\label{e:g2}
\lambda \, \int \EE \left[ \sum^{\neq}_{ \{z_1, \dots, z_d\} \subset \Phi }
\frac{ 1\{ -x \in D^\circ(0, z_1, \dots, z_d); 
           b(0, z_1, \dots, z_d) \cap \Phi = \emptyset \}
}{ |W( 0 \mid  \Phi\cup \{ 0 \}) |^2 } \right]  dx 
\end{eqnarray}
as $-x$ belongs to a single Delaunay interior.
Write $\Phi_d$ for sets of $d$ distinct points in $\Phi$ and
scale each point in (\ref{e:g2}) by $\lambda ^{1/d}$ to obtain 
that $C^\prime(\lambda, d)$ is equal to
\[
\lambda \int \EE \left[ 
\sum_{ z \in \Phi_d}
\frac{ 1\{ - \lambda^{1/d} x \in D^\circ(0, \lambda^{1/d} z) ;
       b(0, \lambda^{1/d} z) \cap \lambda^{1/d} \Phi = \emptyset \}
}{ 
 \lambda^{-2} |W( 0 \mid  \lambda^{1/d} \Phi\cup \{ 0 \}) |^2 } \right]  
 dx
\]
which, since $\lambda^{1/d} \Phi$ is a unit rate Poisson process reduces to
\[
=
\lambda^2 \int \EE_1 \left[ 
 \frac{ \sum_{\{z_1, \dots, z_d\} \subset \Phi}
 1\{ - x \in D^\circ(0, z_1, \dots, z_d); 
  b(0, z_1, \dots, z_d) \cap \Phi = \emptyset \}
}{ |W( 0 \mid  \Phi\cup \{ 0 \}) |^2 } \right]  dx .
\]
The sum of $d$-volumes of Delaunay cells involving $0$ is that of its
contiguous Voronoi cell, and we conclude that
\[
C^\prime(\lambda, d) 
=
 \lambda^2 \, \EE_1 \left[ \frac{1}{ |W( 0 \mid  \Phi\cup \{ 0 \}) | } 
\right] .
\]
\end{proof}

The above results can be summarised as follows.

\begin{thm} \label{t:asymptotic}
Let $\Phi$ be a stationary Poisson point process in $\R^d$ with intensity 
$\lambda > 0$. Then, the Delaunay tessellation field estimator 
$\widehat{ \lambda(0) }$ has asymptotic variance $c_d \lambda^2$ with
\[
c_d = 
\EE_1 \left[ 
\frac{1}{
|W( 0 \mid  \Phi \cup \{ 0 \}) |}
\left\{ 
1 + 
\sum_{y\in \sN(0 \mid \Phi \cup \{ 0 \})}
\frac{
|W( 0 \mid  \Phi \cup \{ 0\}) \cap W( y \mid  \Phi \cup \{ 0 \}) | 
}{
|W( y \mid  \Phi \cup \{ 0\}) |
}
\right\} \right] - 1.
\]
\end{thm}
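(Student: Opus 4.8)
The plan is to assemble the three ingredients that have already been isolated, so that the proof amounts to bookkeeping rather than fresh analysis. First I would apply Corollary~\ref{c:Poisson} with $A = \R^d$, which expresses the asymptotic variance of $\widehat{\lambda(0)}$ as the cross term, plus the diagonal term, minus the squared mean. By construction these three pieces are precisely $C(\lambda, d)$ of Lemma~\ref{e:C}, $C'(\lambda, d)$ of Lemma~\ref{e:C'}, and the square of $\EE[\widehat{\lambda(0)}]$. Thus the task reduces to substituting the closed forms that those results supply.

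Next I would invoke Theorem~\ref{t:unbiased} to replace the mean by $\lambda$, so that the subtracted term becomes $\lambda^2$. For the remaining two terms I would quote the Nguyen--Zessin version of $C(\lambda, d)$ from Lemma~\ref{e:C}, namely $\lambda^2$ times the $\EE_1$-expectation of the neighbour sum, together with the value $C'(\lambda, d) = \lambda^2\, \EE_1\left[ 1/|W(0 \mid \Phi \cup \{0\})| \right]$ from Lemma~\ref{e:C'}. Each of the three contributions now carries the common prefactor $\lambda^2$, so the asymptotic variance is forced into the form $c_d \lambda^2$.

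Finally I would factor out $\lambda^2$ and combine the two surviving expectations into a single $\EE_1$. The crucial observation is that $C'(\lambda, d)$ contributes exactly the constant $1$ inside the braces: since it equals $\lambda^2\, \EE_1\left[ 1/|W(0 \mid \Phi \cup \{0\})| \right]$, it merges with the neighbour-sum expectation as the summand $1$ multiplying $1/|W(0 \mid \Phi \cup \{0\})|$, while the subtracted $\lambda^2$ supplies the trailing $-1$. Collecting these yields precisely the stated constant $c_d$.

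There is no genuine analytic obstacle here, because the substantive work---the Palm calculus, the scaling of the configuration by $\lambda^{1/d}$, and the Nguyen--Zessin rewriting---was already carried out in Theorem~\ref{t:unbiased} and in Lemmata~\ref{e:C}--\ref{e:C'}. The only point that warrants care is the qualifier \emph{asymptotic}: the two lemmata are stated for $g$ defined with $A = \R^d$, so one must confirm that the finite-window variance of Corollary~\ref{c:Poisson} converges to these $\R^d$ expressions as $A \uparrow \R^d$, which is exactly the limiting regime in which the mean computation of Theorem~\ref{t:unbiased} delivers the value $\lambda$.
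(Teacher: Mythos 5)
Your proposal is correct and follows exactly the route the paper intends: the paper itself offers no separate proof, introducing the theorem only with the remark that ``the above results can be summarised as follows,'' i.e.\ by combining Corollary~\ref{c:Poisson}, Theorem~\ref{t:unbiased}, and Lemmata~\ref{e:C}--\ref{e:C'} precisely as you describe. Your closing caveat about passing from the finite window to $A = \R^d$ is a sensible point of care that the paper glosses over.
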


Note that the classic Berman--Diggle estimator (\ref{e:BD}) is asymptotically 
unbiased with variance $\lambda \, \omega_d^{-1} \, h^{-d}$, where $\omega_d$ 
is the volume of the unit ball in $\R^d$. In words, the Berman--Diggle estimator 
is more efficient whenever the average number of points per test ball exceeds 
$1/c_d$.

\section{Poisson processes on the line}
\label{S:Poisson}

For one-dimensional Poisson processes, the distribution of the contiguous 
Voronoi cell can be calculated explicitly for arbitrary intensity functions. 
For simplicity, assume that $A = [-w, w]$ is an interval of radius $w>0$ 
either side of the origin. 

The following lemma is well-known.

\begin{res}
\label{l:lifetime}
Let $\Phi$ be a Poisson point process on $[-w,w]$ with finite intensity
function $\lambda$ and write $\Lambda(a,b) = \int_a^b \lambda(x) \, dx$
for the moment measure of $(a,b)$ for any $-w \leq a \leq b \leq w$. 
For $x\in (-w,w)$, define the random variables
\begin{eqnarray*}
\Phi^-(x) & := & \max \{ y \in \{ -w \} \cup ( \Phi \cap [-w, x) ) \}; \\
\Phi^+(x) & := & \min \{ y \in \{ w \} \cup ( \Phi \cap (x, w] ) \}.
\end{eqnarray*}
Then, their distribution functions are given by
\[
F^-(t)  =  \exp\left[ - \Lambda(t,x) \right]
\]
for $t \in (-w,x)$, with an atom of mass
$P( \Phi^-(x) = -w ) = \exp\left[ - \Lambda(-w,x) \right]$ 
at $-w$, respectively
\[
F^+(s) =  1 - \exp\left[ - \Lambda(x,s) \right]
\] 
for $s \in (x,w)$ with an atom at $w$ of mass
$P( \Phi^+(x) = w )  =  \exp\left[ - \Lambda(x,w) \right]$.
Moreover, for fixed $x$, $\Phi^+(x)$ and $\Phi^-(x)$ are independent 
random variables. 
\end{res}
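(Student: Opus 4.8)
The plan is to reduce all three assertions to void probabilities of the Poisson process. Indeed, the two distribution functions, their atoms, and the independence statement each translate directly into a statement about the number of points that $\Phi$ places in a suitable subinterval of $[-w,w]$, and for a Poisson process this count has a Poisson distribution whose mean is $\Lambda$ of that interval.

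First I would rewrite the event defining $F^-$ in terms of a point count. For $t \in (-w,x)$, the random variable $\Phi^-(x)$, being the largest point of $\Phi$ in $[-w,x)$ (or $-w$ in the absence of such a point), satisfies $\Phi^-(x) \leq t$ precisely when $\Phi$ places no point in the interval $(t,x)$. Hence
\[
F^-(t) = \PP\left( \Phi^-(x) \leq t \right) = \PP\left( \Phi \cap (t,x) = \emptyset \right) = \exp[-\Lambda(t,x)],
\]
the last equality being the void probability of a Poisson process. The atom at $-w$ is obtained the same way: $\Phi^-(x) = -w$ holds exactly when $\Phi \cap [-w,x) = \emptyset$, an event of probability $\exp[-\Lambda(-w,x)]$.

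The argument for $\Phi^+(x)$ is the mirror image. For $s \in (x,w)$, the event $\{ \Phi^+(x) \leq s \}$ is exactly $\{ \Phi \cap (x,s] \neq \emptyset \}$, so that
\[
F^+(s) = 1 - \PP\left( \Phi \cap (x,s) = \emptyset \right) = 1 - \exp[-\Lambda(x,s)],
\]
and the atom at $w$ corresponds to $\Phi \cap (x,w] = \emptyset$, of probability $\exp[-\Lambda(x,w)]$. The independence of $\Phi^-(x)$ and $\Phi^+(x)$ for fixed $x$ then follows at once, since $\Phi^-(x)$ is a measurable function of the restriction $\Phi \cap [-w,x)$ while $\Phi^+(x)$ is a function of $\Phi \cap (x,w]$, and the restrictions of a Poisson process to the disjoint Borel sets $[-w,x)$ and $(x,w]$ are independent.

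There is no serious obstacle here, and the result is, as stated, well known; the only point requiring a little care is the set-theoretic translation of the order-statistic events into void events, together with the bookkeeping of open versus closed endpoints at $t=x$, at $s=x$, and at the window boundary. Since $\Lambda$ is non-atomic, having the density $\lambda$, replacing $(t,x)$ by $[t,x)$, or $(x,s)$ by $(x,s]$, never changes a probability, so these endpoint choices are harmless and may be made to match the definitions of $\Phi^\pm(x)$ exactly.
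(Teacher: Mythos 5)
Your proof is correct. The paper itself offers no proof of this lemma --- it is simply introduced with ``The following lemma is well-known'' --- and your argument via void probabilities (translating the order-statistic events $\{\Phi^-(x)\le t\}$ and $\{\Phi^+(x)\le s\}$ into emptiness of the intervals $(t,x)$ and occupation of $(x,s]$, and deducing independence from the independent-scattering property of the Poisson process on the disjoint sets $[-w,x)$ and $(x,w]$) is exactly the standard argument the author is taking for granted. Your remark that the non-atomicity of $\Lambda$ makes the open-versus-closed endpoint choices immaterial is the right piece of care to include.
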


\subsection{Expectation of the DTFE for Poisson processes on the line}

Note that on the real line, the contiguous Voronoi cell 
$W(x \mid ( \Phi \cup \{ x \} ) \cap [-w,w] )$ is the interval 
$[ \Phi^-(x) , \Phi^+(x) ]$. Thus, Lemma~\ref{l:lifetime} can be used to 
calculate the moments of the Delaunay tessellation field estimator.  In 
this section, we shall deal with edge effects by placing two ghost points 
at the borders $-w$ and $w$.

\begin{thm}
\label{t:mean1D}
Let $\Phi$ be a Poisson point process observed in $A = [-w,w]$ for some
$w>0$ with locally finite intensity function $\lambda : \R \to [0,\infty)$. 
Then, for $x_0 \in A$,
\begin{eqnarray}
\nonumber
\EE\left[ \widehat{\lambda(x_0)} \right] & = &
\int_{-w}^{x_0} \int_{x_0}^w \frac{\Lambda(t,s) \, \lambda(s) \, 
\lambda(t)}{s-t} \, e^{-\Lambda(t,s)} \,  dt \, ds + 
\frac{\Lambda(-w,w) \, e^{-\Lambda(-w,w)}}{2 w}  \\
\label{e:E1DPois}
& + & \int_{x_0}^w \frac{ \Lambda(-w,s) \, \lambda(s)}{w+s} \, 
e^{ -\Lambda(-w,s) } \, ds 
+ \int_{-w}^{x_0} \frac{\Lambda(t,w) \, \lambda(t)}{w-t} \, 
e^{ -\Lambda(t,w) } \, dt.  
\end{eqnarray}
\end{thm}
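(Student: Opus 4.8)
The plan is to start from the Poisson mean formula of Corollary~\ref{c:Poisson}, namely $\EE[\widehat{\lambda(x_0)}] = \int_A \EE[g(x_0 \mid x, \Phi \cup \{x\})]\,\lambda(x)\,dx$, and to evaluate the inner Palm expectation explicitly using the one-dimensional geometry. First I would observe that in $\R^1$ the contiguous Voronoi cell of the added point $x$ in $\Phi \cup \{x\}$ is exactly the interval $[\Phi^-(x), \Phi^+(x)]$, so that $|W(x \mid \Phi \cup \{x\})| = \Phi^+(x) - \Phi^-(x)$, and that the kernel $g(x_0 \mid x, \Phi \cup \{x\})$ is nonzero precisely when $x_0$ lies in the interior of one of the two Delaunay intervals abutting $x$; in that event it equals $1/(\Phi^+(x) - \Phi^-(x))$.

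Next I would split the outer integral according to whether $x < x_0$ or $x > x_0$, the diagonal $x = x_0$ being Lebesgue-null. For $x < x_0$ the kernel is nonzero if and only if there are no points of $\Phi$ in $(x, x_0)$, in which case $\Phi^+(x) = \Phi^+(x_0)$; writing $t = \Phi^-(x)$ and $s = \Phi^+(x_0)$, the quantity to compute is the expectation of the reciprocal $1/(s-t)$ restricted to this emptiness event. Since the three intervals $[-w, x)$, $(x, x_0)$ and $(x_0, w]$ are disjoint, the independence property of the Poisson process makes the emptiness event, the left neighbour $t$ and the right neighbour $s$ mutually independent, so the expectation factorises as $e^{-\Lambda(x, x_0)}\,\EE[1/(s-t)]$. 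I would then insert the marginal laws of $t$ and $s$ from Lemma~\ref{l:lifetime}, each built from an absolutely continuous part (densities $\lambda(t)\,e^{-\Lambda(t,x)}$ and $\lambda(s)\,e^{-\Lambda(x_0, s)}$) together with a boundary atom (masses $e^{-\Lambda(-w, x)}$ at $-w$ and $e^{-\Lambda(x_0, w)}$ at $w$). This yields four contributions — both endpoints interior, left atom with right interior, left interior with right atom, and both atoms — and the case $x > x_0$ admits the mirror-image decomposition.

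The decisive simplification is that in each contribution the three exponential factors telescope by additivity of $\Lambda$: in the both-interior term $e^{-\Lambda(t,x)}\,e^{-\Lambda(x, x_0)}\,e^{-\Lambda(x_0, s)} = e^{-\Lambda(t,s)}$ because $t < x < x_0 < s$, while the left-atom term collapses to $e^{-\Lambda(-w, s)}$ and the both-atom term to $e^{-\Lambda(-w, w)}$. After this collapse the integrand is independent of $x$, so I can integrate $x$ out directly: in the both-interior case $\int_t^{x_0} \lambda(x)\,dx = \Lambda(t, x_0)$, while the mirror case $x > x_0$ contributes $\Lambda(x_0, s)$. Adding the two cases replaces the partial mass $\Lambda(t, x_0) + \Lambda(x_0, s)$ by the full $\Lambda(t, s)$, producing the first term of (\ref{e:E1DPois}); the same pairing turns $\Lambda(-w, x_0) + \Lambda(x_0, s)$ into $\Lambda(-w, s)$ (third term), sends the right-atom pair to the fourth term, and merges $\Lambda(-w, x_0) + \Lambda(x_0, w) = \Lambda(-w, w)$ into the boundary term with prefactor $1/(2w)$.

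I expect the principal obstacle to be the careful bookkeeping of the atomic parts of the neighbour laws: tracking which of $\Phi^-$ and $\Phi^+$ lands on a boundary point, matching each of the four sub-cases for $x < x_0$ with its counterpart for $x > x_0$, and verifying that the partial $\Lambda$-integrals arising from the $x$-integration reassemble into the full intervals in (\ref{e:E1DPois}). No individual step is analytically deep once the exponential telescoping is recognised; the genuine risk is combinatorial, namely mislabelling a boundary atom or an integration range so that the Case $x<x_0$ and Case $x>x_0$ pieces fail to combine.
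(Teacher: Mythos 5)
Your proposal is correct and takes essentially the same route as the paper: both reduce the problem via the Poisson mean formula to computing $\EE\left[ g(x_0 \mid x, \Phi \cup \{x\}) \right]$, identify the contiguous Voronoi cell with $[\Phi^-,\Phi^+]$, expand the expectation over the two independent neighbour laws of Lemma~\ref{l:lifetime} (density plus boundary atom, giving the same four contributions), and recombine the $x<x_0$ and $x>x_0$ cases after the exponential factors telescope -- your explicit factoring-out of the emptiness probability $e^{-\Lambda(x,x_0)}$ is just the paper's implicit restriction of the range of $dF^+$ written out. The only point you leave untreated is the degenerate case $x_0=\pm w$, which the paper disposes of separately under the convention that integrals over intervals of zero length vanish.
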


\begin{proof}
Fix $x_0 \neq x \in (-w,w)$, and let $\varphi$ be a realisation of $\Phi$,
which we augment by $-w$ and $w$ in order to obtain bounded Delaunay
cells. Since almost surely, $x \not \in \Phi$ and $x_0 \not \in \Phi$, 
assume $x_0, x \not \in \varphi$, and consider $g(x_0 \mid x, \varphi 
\cup \{ x \})$ as defined in (\ref{e:DTFEkernel}). 
Note that $x_0$ belongs to a single Delaunay cell interior. If $x$ is 
no endpoint of this cell, $g(x_0 \mid x, \varphi \cup \{ x \}) = 0$. 
Otherwise, $g(x_0 \mid x, \varphi \cup \{ x \}) = 
1 / ( \varphi^+(x_0) - \varphi^-(x_0) )$, cf.\ Lemma~\ref{l:lifetime}.

First, assume $x < x_0$. By Lemma~\ref{l:lifetime} applied to the 
point $x$, 
\[
\EE \left[ g(x_0 \mid x, \Phi \cup \{ x \} ) \right]  =  
\int_{-w}^x \int_{x_0}^w \frac{ dF^-(t) \, dF^+(s) }{ s - t } 
 =  \frac{e^{-\Lambda(-w,w)}}{2 w} + 
\]
\[
+ \int_{x_0}^w \frac{\lambda(s)}{w+s} \, e^{ -\Lambda(-w,s) } \, ds 
 +  
\int_{-w}^x \frac{\lambda(t)}{w-t} \, e^{ -\Lambda(t,w)} \, dt  
 +  
\int_{-w}^x \int_{x_0}^w \frac{\lambda(s) \, \lambda(t)}{s-t} \,
    e^{ -\Lambda(t,s) } \, dt \, ds.
\]
Similarly, for $x_0 < x $, 
\[
\EE \left[ g(x_0 \mid x, \{ x \} \cup \Phi ) \right]  =  
\int_{-w}^{x_0} \int_x^w \frac{ dF^-(t) \, dF^+(s) }{ s - t } \
 =  \frac{e^{-\Lambda(-w,w)}}{2 w} + 
\]
\[
\int_x^w \frac{\lambda(s)}{w+s} \, e^{ -\Lambda(-w,s) } \, ds 
 +  
\int_{-w}^{x_0} \frac{\lambda(t)}{w-t} \, e^{ -\Lambda(t,w) } \, dt 
 + 
\int_{-w}^{x_0} \int_x^w \frac{\lambda(s) \, \lambda(t)}{s-t} \,
   e^{ -\Lambda(t,s) } \, dt \, ds.
\]
By Theorem~\ref{t:DTFE-mean}, the expectation of the Delaunay tessellation
field estimator is as stated for $x_0 \in (-w, w)$.

It remains to consider $x_0 =-w$ or $w$. In the first case, $\varphi^-(x_0)$
is replaced by $-w$; for $x_0 = w$, $\varphi^+(x_0)$ becomes $w$ in the 
evaluation of $g(x_0 \mid x, \varphi \cup \{ x \})$. 
Thus, for example,
\[
\EE \left[ g(-w \mid x, \Phi \cup \{ x \} ) \right]  =  
e^{-\Lambda(-w,x)} \, \int_x^w \frac{ dF^+(s) }{w+s} =
   \frac{e^{-\Lambda(-w,w)}}{2 w} + 
\int_x^w \frac{\lambda(s)}{w+s} \, e^{ -\Lambda(-w,s) } \, ds ,
\]
with a similar expression for $x_0 = w$. Upon integration, (\ref{e:E1DPois})
is obtained, under the convention that integrals over intervals of
zero length vanish.
\end{proof}


In general, (\ref{e:E1DPois})  must be evaluated numerically.
For the homogeneous Poisson process, analytic evaluation is 
possible. In fact, it can be shown that the estimator is unbiased
even near the borders of the observation interval.

\begin{cor}
\label{t:mean1D-stat}
Let $\Phi$ be a stationary Poisson point process observed in $A = [-w,w]$ 
for some $w>0$ with intensity $\lambda > 0$. Then, the Delaunay tessellation 
field estimator $\widehat{ \lambda(x_0) }$ is unbiased for all $x_0 \in A$.
\end{cor}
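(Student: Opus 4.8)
The plan is to specialise the general one-dimensional formula (\ref{e:E1DPois}) of Theorem~\ref{t:mean1D} to the homogeneous case $\lambda(x) \equiv \lambda$ and to verify by direct computation that all the exponential contributions cancel, leaving exactly $\lambda$. First I would record that stationarity makes the cumulative intensity linear, $\Lambda(a,b) = \lambda(b-a)$, so that every occurrence of the ratio $\Lambda(t,s)/(s-t)$ collapses to the constant $\lambda$. This is the key simplification: each of the four terms in (\ref{e:E1DPois}) then reduces to an elementary exponential integral with no rational weight remaining, and in particular the troublesome factors $1/(s-t)$, $1/(w+s)$ and $1/(w-t)$ disappear.

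Next I would evaluate the four terms in turn. The double integral becomes $\lambda^3 \int_{-w}^{x_0}\!\int_{x_0}^w e^{-\lambda(s-t)}\,dt\,ds$, which factorises as a product of one-dimensional integrals over $t$ and $s$ and yields $\lambda\,[\,1 - e^{-\lambda(w+x_0)} - e^{-\lambda(w-x_0)} + e^{-2\lambda w}\,]$. The boundary mass term is immediately $\lambda\,e^{-2\lambda w}$, while the two single integrals each reduce to $\lambda^2\int e^{-\lambda(\,\cdot\,)}$ and contribute a weight of the form $\lambda\,[\,e^{-\lambda(w\pm x_0)} - e^{-2\lambda w}\,]$.

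Finally I would add the four contributions and track the coefficient of each distinct exponential. The terms $e^{-\lambda(w+x_0)}$ and $e^{-\lambda(w-x_0)}$ cancel between the double integral and the respective single integral, and the four copies of $e^{-2\lambda w}$ (appearing with signs $+,+,-,-$) cancel among themselves, leaving only the constant $\lambda$ inherited from the double integral. Hence $\EE[\widehat{\lambda(x_0)}] = \lambda$ for every $x_0 \in (-w,w)$. The boundary values $x_0 = \pm w$ are covered by the same formula under the convention that integrals over intervals of zero length vanish: at $x_0 = w$ the double integral and one single integral drop out, and the surviving terms again sum to $\lambda$, and symmetrically at $x_0 = -w$. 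The only real care required is the bookkeeping in the cancellation; there is no analytic obstacle, since stationarity has already removed the rational weights that obstruct a closed form in the inhomogeneous case.
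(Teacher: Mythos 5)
Your proposal is correct and follows the same route as the paper: specialise \eqref{e:E1DPois} to $\Lambda(a,b)=\lambda(b-a)$, evaluate the four terms explicitly, and observe that the exponentials cancel to leave $\lambda$, with the endpoints $x_0=\pm w$ handled by the vanishing of the degenerate integrals. Your term-by-term values agree with those in the paper's proof.
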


\begin{proof}
For a stationary Poisson point process, the double integral in
(\ref{e:E1DPois}) reduces to 
\[
\lambda \left( e^{\lambda x_0} - e^{-\lambda w} \right) \times
\left( e^{-\lambda x_0}- e^{-\lambda w} \right)
\]
and in particular vanishes for $x_0 = -w$ or $w$. The three border correction 
terms are equal to $\lambda e^{-2 \lambda w}$, to $\lambda e^{-\lambda w}
( e^{-\lambda x_0} - e^{-\lambda w} )$, and to $\lambda e^{-\lambda w} 
( e^{\lambda x_0} - e^{-\lambda w} )$, respectively. The sum of all four terms 
is $\lambda$, so the estimator is unbiased.
\end{proof}

Note that the Berman--Diggle estimator is unbiased as well, but that 
this may not be true for (\ref{e:M}) due to edge correction near 
the border.

\subsection{Variance of the DTFE for Poisson processes on the line}

In this section, we derive the asymptotic variance of the Delaunay 
tessellation field estimator for a stationary Poisson process on the 
line. The result can be used to approximate the variance when the
underlying intensity function is smoothly varying.

\begin{thm}
\label{t:var1D-stat}
Let $\Phi$ be a stationary Poisson point process observed in $A = [-w, w]$ 
for some $w>0$ with intensity $\lambda > 0$. Then, as  $w\to \infty$,
the Delaunay tessellation field estimator $\widehat{ \lambda(0) }$ has 
asymptotic variance
\(
 2 \, \lambda^2 (2-\pi^2/6 ) \approx 0.7 \lambda^2 .
\)
\end{thm}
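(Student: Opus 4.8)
The plan is to specialise the general asymptotic variance formula of Theorem~\ref{t:asymptotic} to $d=1$ and to evaluate the constant $c_1$ in closed form. By that theorem the asymptotic variance of $\widehat{\lambda(0)}$ equals $c_1\lambda^2$, where $c_1$ is an expectation under a unit intensity Poisson process $\Phi$ with the origin adjoined. It therefore suffices to show $c_1 = 2(2-\pi^2/6)$; the edge effects of the finite window $[-w,w]$ with its two ghost points wash out as $w\to\infty$, so the finite-window variance converges to $c_1\lambda^2$.

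First I would unravel the one-dimensional geometry. For $\Phi\cup\{0\}$ on the line the origin has exactly two Voronoi neighbours: the nearest point $-L_1$ to its left and the nearest point $R_1$ to its right, where $L_1,R_1$ are independent $\mathrm{Exp}(1)$ variables. The contiguous Voronoi cell of $0$ is the interval $[-L_1,R_1]$ of length $L_1+R_1$. Writing $L_2,R_2$ for the next gaps (again independent $\mathrm{Exp}(1)$), the neighbour $R_1$ has contiguous cell $[0,R_1+R_2]$ with overlap $W(0)\cap W(R_1)=[0,R_1]$, and symmetrically on the left. Substituting these lengths into the formula for $c_1$ gives
\[
c_1 = \EE_1\!\left[\frac{1}{L_1+R_1}\left(1 + \frac{R_1}{R_1+R_2} + \frac{L_1}{L_1+L_2}\right)\right] - 1,
\]
with $L_1,R_1,L_2,R_2$ iid $\mathrm{Exp}(1)$.

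Next I would reduce this to a tractable integral. Since $L_1+R_1\sim\mathrm{Gamma}(2,1)$, the term coming from the ``$1$'' contributes $\EE_1[1/(L_1+R_1)]=1$, cancelling the $-1$. By left/right symmetry the remaining two terms are equal, so $c_1 = 2I$ with $I = \EE[X/((X+Y)(X+Z))]$ for iid $\mathrm{Exp}(1)$ variables $X,Y,Z$. Conditioning on $X$ and using $\EE[1/(X+Y)] = e^{X}E_1(X)$, where $E_1$ is the exponential integral, turns this into $I = \int_0^\infty x\,e^{x}E_1(x)^2\,dx$; equivalently, inserting $E_1(x)=\int_1^\infty e^{-xt}/t\,dt$ yields the clean double integral $I = \int_1^\infty\int_1^\infty (st(s+t-1)^2)^{-1}\,ds\,dt$.

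The main obstacle is the evaluation of this integral, and this is where the $\pi^2/6$ enters. I expect to integrate by parts in $\int_0^\infty x e^{x}E_1(x)^2\,dx$, choosing the antiderivative of $x e^{x}$ so that the boundary terms vanish at both ends, reducing it to $I = 2 - 2\int_0^\infty \frac{(1-e^{-x})E_1(x)}{x}\,dx$ via $\int_0^\infty E_1(x)\,dx = 1$. The remaining integral collapses, using $\frac{1-e^{-x}}{x}=\int_0^1 e^{-xs}\,ds$ together with the representation of $E_1$ and Fubini, to $\int_0^1 \frac{\ln(1+s)}{s}\,ds = \pi^2/12$, the classical value $\sum_{n\ge 1}(-1)^{n-1}/n^2$. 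Hence $I = 2-\pi^2/6$ and $c_1 = 2(2-\pi^2/6)$, giving asymptotic variance $2\lambda^2(2-\pi^2/6)\approx 0.7\lambda^2$. The delicate points will be justifying the vanishing of the boundary terms (the factor must be chosen so that $v(x)(\ln x)^2\to 0$ as $x\to 0$) and the interchanges of integration, both of which follow from the exponential decay of $E_1$ at infinity and its logarithmic singularity at the origin.
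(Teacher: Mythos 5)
Your proposal is correct, but it follows a genuinely different route from the paper's own proof --- in fact, precisely the alternative that the paper mentions in passing after the proof (``a slightly simpler proof can be obtained by an appeal to Theorem~\ref{t:asymptotic}'') but does not carry out. The paper works directly from Theorem~\ref{t:DTFE-var} on the finite window $[-w,w]$ with ghost points, writes out the diagonal term (\ref{e:var1Dhelp}) and the cross term (\ref{e:gg}) for a general intensity function, specialises to $\lambda(\cdot)\equiv\lambda$ and $x_0=0$, and then shows term by term that all boundary corrections vanish as $w\to\infty$, leaving $\lambda^2$ from the diagonal part and $2\lambda^2\int_0^\infty u\,e^u E_1(u)^2\,du$ from the fourfold cross integral. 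You instead specialise the constant $c_d$ of Theorem~\ref{t:asymptotic} to $d=1$, using the renewal structure of the unit-rate Poisson process (iid exponential gaps $L_1,L_2,R_1,R_2$) to identify $|W(0)|=L_1+R_1$, the two neighbours, and the overlaps, arriving at $c_1=2\,\EE[X/((X+Y)(X+Z))]=2\int_0^\infty x\,e^xE_1(x)^2\,dx$; your geometric identifications and the reduction $\EE[1/(L_1+R_1)]=1$ are all correct, and your integration-by-parts evaluation of the integral as $2-\pi^2/6$ (via $\int_0^1 s^{-1}\log(1+s)\,ds=\pi^2/12$) is a nice self-contained derivation of a value the paper simply cites from Geller and Ng. What each approach buys: yours is shorter and more transparent, exploiting independence of the gaps; the paper's direct computation is longer but rigorously controls the $w\to\infty$ limit of the windowed estimator (your ``edge effects wash out'' is asserted rather than proved, and effectively delegates that control to the unproven asymptotic content of Theorem~\ref{t:asymptotic}) and, as the paper notes, yields as a by-product exact finite-$w$ second-moment formulas valid for non-homogeneous intensity functions, which your stationarity-based argument cannot give.
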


The result should be compared to $\lambda / (2h)$ for the Berman--Diggle
kernel estimator \cite{BermDigg89}, see also \cite{Lies07}. If 
$2 \lambda h > 1.4$, that is the average number of points per bin at least 
$1.4$, kernel estimation is the better choice. Naturally, in order to 
compute $\widehat{ \lambda(x_0) }$, two points of the underlying process 
are used.


In order to give the proof, some special function theory is needed.
Let $x>0$. Recall that the exponential integral is defined as
\[
E_1(x) = \int_1^\infty \frac{e^{-tx}}{t} dt 
       = \int_x^\infty \frac{e^{-u}}{u} du.
\]
Its integral satisfies
\[
E_2(x) = \int_x^\infty E_1(s) ds = e^{-x} - x E_1(x).
\]
In the limit, $E_1(0) = \infty$ and $E_2(0) = 1$. Furthermore,
\[
 \int_0^\infty u \, e^u \, E_1(u)^2 \, du =  2- \frac{\pi^2}{6}.
\]
See for example \cite{GellNg69} for further details.
We shall also need the equation
\[
\int_0^c e^{a x} \, E_1(a x) \, dx = 
\frac{ \gamma + \log( a c) + e^{a c} \, E_1( a c) }{a}
\]
where $a$ and $c$ are strictly positive constants, and $\gamma \approx 0.577$
is the Euler-Mascheroni constant.

\

\begin{proof}
By Theorem~\ref{t:mean1D-stat}, asymptotically 
$\EE \left[ \widehat{ \lambda(0) } \right] = \lambda$. For the variance, 
by Theorem~\ref{t:DTFE-var}, we need to evaluate two further integrals. Now,
argueing as in the proof of Theorem~\ref{t:mean1D}, 
\[
\int_A \EE \left[ g^2(x_0 \mid x, \Phi \cup \{ x \})  \right] \lambda(x) \, dx 
 = 
\int_{-w}^{x_0} \int_{x_0}^w \frac{\Lambda(t,s) \, \lambda(s) \, 
\lambda(t)}{(s-t)^2} \, e^{-\Lambda(t,s)} \,  dt \, ds + 
\]
\begin{equation}
\label{e:var1Dhelp}
\frac{\Lambda(-w,w) \, e^{-\Lambda(-w,w)}}{4 w^2} + 
\int_{x_0}^w \frac{ \Lambda(-w,s) \, \lambda(s)}{(w+s)^2} \, 
e^{ -\Lambda(-w,s) } \, ds 
+ \int_{-w}^{x_0} \frac{\Lambda(t,w) \, \lambda(t)}{(w-t)^2} \, 
e^{ -\Lambda(t,w) } \, dt.  
\end{equation}

Since the intensity function is constant and we took $x_0 = 0$, 
(\ref{e:var1Dhelp}) reduces to
\[
\frac{ \lambda e^{-2\lambda w}}{ 2w } +
\lambda e^{-\lambda w } \int_0^w \frac{\lambda e^{ -\lambda s }}{w+s} \, ds +
\lambda e^{-\lambda w } \int_{-w}^0 \frac{\lambda e^{ \lambda t }}{w-t} \, dt  + 
\int_{-w}^0 \int_0^w \frac{\lambda^3 e^{\lambda t} e^{-\lambda s}}{s-t} \,
dt \, ds.
\]
Clearly, the first term above converges to $0$ as $w\to \infty$.
Due to symmetry, the two middle terms are equal. Note that
\[
2 \lambda \int_0^w \frac{\lambda e^{ -\lambda (s+w) }}{s+w} \, ds
=
2 \lambda^2 \int_{\lambda w}^{2\lambda w} \frac{e^{-u}}{u} \, du
=
2 \lambda^2 \left[ E_1(\lambda w) - E_1(2\lambda w) \right],
\]
which converges to zero as $w \to \infty$.
Moreover,
\[
\lambda^3 \int_{-\infty}^0 \int_0^\infty 
  \frac{e^{\lambda t} e^{-\lambda s}}{s-t} \, dt \, ds  = 
\lambda^3 \int_{-\infty}^0 E_1(-\lambda t) \, dt =
\lambda^2 E_2(0) = \lambda^2.
\]

To calculate the double integral in Theorem~\ref{t:DTFE-var}, let $x \neq y$
be points of $(-w,w)$, fix $x_0 \not \in \{ x, y, -w, w \}$, and let $\varphi$ 
be a realisation of $\Phi$, which we augment by $-w$ and $w$ in order to 
obtain bounded Delaunay cells. Since almost surely none of $x$, $y$ or $x_0$ 
lie in $\Phi$, assume $x_0, x, y \not \in \varphi$, and consider 
$g(x_0 \mid x, \varphi \cup \{ x, y \})$ as defined in (\ref{e:DTFEkernel}). 
Note that $x_0$ belongs to a single Delaunay cell interior. If $x$ and $y$
are not both endpoints of this cell, $g(x_0 \mid x, \varphi \cup \{ x, y \}) 
\, g(x_0 \mid y, \varphi \cup \{ x, y \}) = 0$. Otherwise, without loss of 
generality, $x < x_0 < y$, and $g(x_0 \mid x, \varphi \cup \{ x, y \}) = 1 /
( y - \varphi^-(x_0) )$ and $g(x_0 \mid y, \varphi \cup \{ x, y \}) = 1 / 
( \varphi^+(x_0) - x )$. 

Thus, for $x < x_0$ and $y > x_0$, let $F^-$ and $F^+$ be the cumulative 
distribution functions of $\Phi^-(x_0)$ and $\Phi^+(x_0)$. By
Lemma~\ref{l:lifetime},
\[
\EE \left[ g(x_0 \mid x, \Phi \cup \{ x, y \} ) \,
                         g(x_0 \mid y, \Phi \cup \{ x, y \} ) \right] 
=
\int_{-w}^x \int_y^w \frac{dF^-(t) \, dF^+(s)}{ (y-t) \, (s-x) }
= 
\]
\[
\int_y^w \frac{ \lambda(s) }{(w+y) \, ( s - x) } \, 
e^{ -\Lambda(-w,s)}  \, ds +  
\int_{-w}^x \frac{ \lambda(t) }{ (y - t) \, (w-x) } \, 
e^{ -\Lambda(t,w) } \, dt 
\]
\[
+ \frac{e^{ -\Lambda(-w,w)}}{(w+y) \, (w-x) } + 
\int_{-w}^x \int_y^w \frac{ \lambda(s)\, \lambda(t) }{( y - t ) \,
(s-x)} \, e^{ -\Lambda(t,s) } \, dt \, ds.
\]
By symmetry, 
\[
\int_A \int_A
\EE \left[ g(x_0 \mid x, \Phi \cup \{ x, y \} ) \,
                         g(x_0 \mid y, \Phi \cup \{ x, y \} ) \right] 
\lambda(x) \, \lambda(y) \, dx \, dy = 
\]
\begin{eqnarray}
\nonumber
& &
2 e^{-\Lambda(-w,w)} \int_{-w}^{x_0} \frac{\lambda(x)}{w-x} \, dx
\int_{x_0}^w \frac{\lambda(y)}{w+y} \, dy 
\\
& + & 
\nonumber
2 \int_{x_0}^w \lambda(s) e^{-\Lambda(-w,s)} \left[ 
\int_{-w}^{x_0}  \frac{\lambda(x)}{s-x} \, dx
\int_{x_0}^s \frac{\lambda(y)}{w+y} \, dy
\right] ds \\
& + &
\nonumber
2 \int_{-w}^{x_0} \lambda(t) e^{-\Lambda(t,w)} \left[ 
\int_{t}^{x_0} \frac{\lambda(x)}{w-x} \, dx
 \int_{x_0}^w \frac{\lambda(y)}{y-t} \, dy
\right] dt \\
& + &
2 \int_{-w}^{x_0} \int_{x_0}^{w}
\lambda(t) \, \lambda(s) \, e^{-\Lambda(t,s)} \left[
 \int_t^{x_0} \frac{\lambda(x)}{s-x} \, dx 
\int_{x_0}^s \frac{\lambda(y)}{y-t} \, dy \right] dt \, ds.
\label{e:gg}
\end{eqnarray}
For $x_0 \in \{ -w, w \}$, formula (\ref{e:gg}) holds true under 
the convention that integrals over intervals of zero length vanish,
as in this case $x_0$ cannot belong to any Delaunay cell with 
endpoints $x < x_0 < y$.

Next, we plug in $x_0 = 0$ and $\lambda(\cdot) \equiv \lambda$, and
consider each integral in (\ref{e:gg}) in turn. 
The main term is the four fold integral 
\[
 \int_{-w}^0 \int_0^w
\int_{-w}^x \int_y^w \frac{ 2 \lambda^4 \, e^{ \lambda t } e^{ - \lambda s }}
{( y - t ) \, (s-x)} \,  dx \, dy \, dt \, ds.
\]
Its limit as $w\to \infty$ is
\[
2 \, \lambda^4 \int_{-\infty}^0 \int_0^\infty e^{\lambda(y-x)} 
\left[
\int_{-\infty}^x \frac{ e^{-\lambda(y-t)} } { y-t } \, dt 
\int_{y}^\infty \frac{ e^{-\lambda(s-x)} } { s-x } \, ds 
\right] dx \, dy  = 
\]  
\[
2 \, \lambda^4 \int_{-\infty}^0 \int_0^\infty e^{\lambda(y-x)} 
E_1( \lambda(y-x) )^2 \, dx \, dy = 
 2 \, \lambda^3 \int_{-\infty}^0 \int_{-\lambda x}^\infty e^{u} E_1(u)^2 dx \, du 
 =   
\]
\[
2 \, \lambda^2 \int_0^\infty \int_y^\infty e^{u} E_1(u)^2 dy \, du =
2 \, \lambda^2 \int_0^\infty u \, e^u \, E_1(u)^2 \, du =
 2 \, \lambda^2 (2-\pi^2/6 ) ,
\]
upon a change of integration order.

The first term in (\ref{e:gg}) reduces to 
\(
2 e^{-2\lambda w} ( \lambda \log 2)^2
\)
for a homogeneous Poisson process, which tends to zero as $w\to \infty$.

It remains to consider the sum of the two three fold integrals in (\ref{e:gg})
\[
 \int_{-w}^0 \int_0^w \int_y^w
\frac{4\lambda^3 \, e^{-\lambda (s+w)} }{(s-x) \, (y+w)} \, dx \, dy \, ds
\]
which can be written as
\[
4 \, \lambda^3 \int_0^w \int_0^w  \left( \int_0^s \frac{dy}{y+w} \right)
\frac{e^{-\lambda (s+w)} }{s+x } \, dx \, ds \leq
4 \, \lambda^3 \, \log 2 \int_0^w e^{-\lambda w + \lambda x} \left(
\int_0^w \frac{e^{-\lambda (s+x)} }{s+x } \, ds \right) dx 
=
\]
\[
4 \, \lambda^3 \, \log 2 \int_0^w e^{-\lambda w + \lambda x} 
\left[ E_1(\lambda x) - E_1(\lambda x + \lambda w) \right] dx
= 4 \, \lambda^2 \, h( \lambda, w ) \log 2,
\]
where 
\begin{eqnarray*}
h(\lambda, w) & = &
e^{-\lambda w} \int_0^{\lambda w} e^u 
\left[ E_1(u) - E_1(u + \lambda w) \right] du \\
& = &
\left( e^{-\lambda w} + e^{-2 \lambda w} \right) 
\int_0^{\lambda w} e^u \, E_1(u) \, du 
- e^{-2 \lambda w} \int_0^{2 \lambda w} e^u \, E_1(u) \, du \\
& = &
e^{-\lambda w} \gamma + ( e^{-\lambda w} + e^{-2\lambda w} ) \log(\lambda w) 
-  e^{-2\lambda w} \log( 2 \lambda w) \\
& + & E_1(\lambda w) ( 1 + e^{-\lambda w} )
- E_1(2 \lambda w)
\end{eqnarray*}
tends to zero as $w\to \infty$. The proof is finished upon
collection of all terms.
\end{proof}

As a corollary, the proof gives an expression for the second moment
of the Delaunay tessellation field estimator of the intensity function
for Poisson processes with not necessarily constant locally finite 
intensity function on intervals of the form $[-w, w]$ by combining 
(\ref{e:var1Dhelp})--(\ref{e:gg}). A slightly simpler proof can be 
obtained by an appeal to Theorem~\ref{t:asymptotic}, but such a proof 
cannot be generalised to non-homogeneous Poisson processes.

\section{Discussion}

In this paper, we analysed Schaap and Van de Weygaert's Delaunay
tessellation field estimator \cite{SchaWeyg00,Scha07} for the 
intensity function of a point process. We expressed its mean and 
variance in terms of the first and second order factorial moment 
measures of the underlying point process, and placed the estimator
in the context of adaptive kernel estimation. We then focussed on
Poisson point processes, and showed that for stationary Poisson
processes, the DTFE is asymptotically unbiased with a variance 
that is proportional to the squared intensity. The proportionality
constant depends on the dimension. For $d=1$, explicit calculation
is possible. For $d=2$, we used the DELDIR package \cite{deldir}
to obtain $C(\lambda, 2) \approx 0.8 \lambda^2$ and 
$C^\prime(\lambda, 2) \approx 0.6 \lambda^2$, see Lemma~\ref{e:C}
and \ref{e:C'}. Note that in the plane it is possible to write 
mean and variance as repeated integrals in the spirit of Calka 
\cite{Calk03}, but explicit evaluation seems difficult. Simulations 
for the case $d=3$ of most interest to cosmologists can be found in 
Schaap's Ph.D.\ thesis \cite{Scha07}.

\section*{Acknowledgement}
The author is grateful to Dr.\ N.M. Temme for access to \cite{GellNg69}.

\end{document}